\DeclareMathOperator*{\argmax}{arg\,max}
\newcommand{\simiid}{\stackrel{\textnormal{i.i.d.}}{\sim}}
\newtheorem{theorem}{Theorem}
\newtheorem{corollary}{Corollary}
\newtheorem{lemma}{Lemma}
\theoremstyle{remark}
\newcommand{\p}{\mathbb P}
\newcommand{\di}[1]{\mathop{d#1}}
\DeclareMathOperator{\Var}{Var}
\DeclareMathOperator{\Unif}{Unif}
\DeclareMathOperator{\Bern}{Bern}
\DeclareMathOperator{\TV}{TV}
\DeclareMathOperator{\ETV}{ETV}
\DeclareMathOperator{\HETV}{HETV}
\newcommand{\e}{\mathbb E}
\newcommand{\ci}{\mathrel{\perp\mspace{-10mu}\perp}}
\newcommand{\cip}{\stackrel{p}{\to}}
\title{Total Variation Floodgate\\ for Variable Importance Inference in Classification}
\author{Wenshuo Wang\footnote{wenshuowang1997@gmail.com, corresponding author}, Lucas Janson\footnote{Harvard University, ljanson@fas.harvard.edu}, Lihua Lei\footnote{Stanford University, lihualei@stanford.edu} ~and Aaditya Ramdas\footnote{Carnegie Mellon University, aramdas@cmu.edu}}
\date{\today}
\begin{document}

\maketitle

\begin{abstract}
Inferring variable importance is the key problem of many scientific studies, where researchers seek to learn the effect of a feature $X$ on the outcome $Y$ in the presence of confounding variables $Z$. Focusing on classification problems, we define the expected total variation (ETV), which is an intuitive and deterministic measure of variable importance that does not rely on any model context. We then introduce algorithms for  statistical inference on the ETV under design-based/model-X assumptions. These algorithms build on the floodgate notion for regression problems \citep{LZ-LJ:2020}. The algorithms we introduce can leverage any user-specified regression function and produce asymptotic  lower confidence bounds for the ETV. We show the effectiveness of our algorithms with simulations and a case study in conjoint analysis on the US general election.
\end{abstract}

{\small {\bf Keywords.} Conjoint analysis, effect size, randomized experiments, sensitivity analysis, total variation, variable importance measure.}

\section{Introduction}
\subsection{Motivation}
In many scientific studies, researchers would like to understand the effect of a feature $X$ on a response variable $Y$, while controlling for potential confounding features $Z$. While this question is sometimes simplified to a hypothesis testing problem of ``does $X$ affect $Y$ at all in the presence of $Z$'', it is more desirable to follow up with ``if so, by how much''; that is, we wish to provide a quantitative variable importance measure (VIM). In traditional statistical frameworks, such a follow-up question is addressed by postulating a parametric model of $\mathcal L(Y\mid X,Z)$ and looking at the inferred parameters. However, such parametric models are often limited in their capacity to capture complex relationships.
This paper aims at defining a VIM for classification problems and conducting inference on it. As a design objective, this (population-level) VIM must be model-free, in that it does not rely on an underlying model assumptions. We would also like the VIM to be intuitive and easy to interpret.

\subsection{Our contribution}

The main contributions of this work are listed below.
\begin{enumerate}
    \item We propose the expected total variation (ETV) as a VIM that is well-defined for any type of variables $(X, Y, Z)$. In this paper, we focus on categorical response variables $Y$, for which VIMs with rigorous statistical guarantees were rarely discussed in the literature and ETV has both an intuitive model-free interpretation and sound statistical properties. 
    \item We introduce algorithms that make statistical inference on the ETV. These algorithms work without imposing any assumptions on the distribution $\mathcal L(Y\mid X,Z)$. Instead, we make the design-based/model-X assumption that we can sample from $\mathcal L(X\mid Z)$, which we discuss in Secion~\ref{sec:fg-etv}. We accompany our algorithms with practical parameter choice recommendations and show that they work well in our extensive simulation results.
    \item We demonstrate the effectiveness of our algorithms in a real conjoint data analysis study.
\end{enumerate}

In the remainder of this section, we will discuss related work and introduce notation. The mathematical definition of ETV will be given in Section~\ref{sec:floodgate-categorical}. We will discuss the properties of ETV in Section~\ref{sec:etv}. Section~\ref{sec:fg-etv} is devoted to our main floodgate algorithm to conduction inference on ETV. We then study the algorithm parameters in Section~\ref{sec:c} to facilitate practical applications and discuss a generalization of ETV in Section~\ref{sec:hierarchical}. Section~\ref{sec:sim} includes simulations on synthetic data to support the effectiveness of our proposed method. We then apply our method to a conjoint analysis example on political candidate preferences in Secton~\ref{sec:conjoint}. 

\subsection{Related work}
The canonical VIM is defined through parametric models. When $Y$ is categorical, the textbook approach is to parameterize $\mathcal L(Y\mid X,Z)$ with a generalized linear model \citep{agresti2015foundations}, and there has been work on parameter inference in high-dimensional sparse models \citep{van2014asymptotically,belloni2016post}. Generalized linear models have limited capacity in capturing non-linear effects, and such parameter-based VIMs crucially rely on the model specification and become ill-defined when the model is misspecified.

A more contemporary line of work utilizes machine learning methods to capture variable importance \citep{fisher2019all,watson2021testing,molnar2023model} in a model-free manner. While these VIMs are well-defined without parametric assumptions, they are associated with a trained machine learning model, which depends on the model choice and the data itself. 

Another existing approach \citep{castro2009polynomial,williamson2020efficient,ning2022shapley} borrows ideas from the game theory literature and considers VIMs based on the Shapley value \citep{shapley1953value}. Shapley-value based VIMs capture the variable's predictive power, and are generally positive even for statistically null variables (that is, $X$ where $X\ci Y\mid Z$). Thus, while these VIMs have attractive predictive interpretations, they lack a causal interpretation.

\citet{azadkia2021simple} introduced a model-free VIM based on cumulative distributions functions for non-categorical $Y$; \citet{huang2020kernel} generalized it to categorical $Y$, but it relies on a user-specified kernel function. These VIMs have the appealing property that they are $0$ if and only if $Y\ci X\mid Z$ and $1$ if and only if $Y$ is a deterministic measurable function of $(X,Z)$. Both papers considered consistent estimators of the VIMs and not lower confidence bounds. \citet{LZ-LJ:2020} proposed a model-free VIM called the minimum mean squared error (mMSE) for non-categorical $Y$ and provided a lower confidence bound. \citet[Section~3.1]{LZ-LJ:2020} extended their inference to a VIM called the mean absolute conditional mean (MACM), which is defined for binary $Y$. Similarly, \citet{williamson2021general} defined a class of VIMs based on a variable's additional predictiveness and constructed estimators and confidence intervals. These VIMs hinge on a pre-determined family of predictors, and while the VIMs could work for any type of responses, \citet{williamson2021general} also only considered inference in cases of non-categorical and binary responses where the conditional mean is meaningful.

To the best of our knowledge, our work is the first to propose a model-free VIM for general categorical responses that is model-free, natural and easy to interpret, and provide a lower confidence bound for it.

Finally, there are two methods in the literature \citep{LZ-LJ:2020,michel2022high} that have particularly strong connections with our proposed method. As it requires first introducing our algorithm to properly discuss comparisons with these methods, we will review these works and their relationship to ours in Section~\ref{sec:comp}.



\subsection{Notation}
For random variables or vectors $W_1$ and $W_2$, $\mathcal L(W_1)$ means the distribution of $W_1$ and $\mathcal L(W_1\,|\,W_2)$ means the conditional distribution of $W_1$ given $W_2$. $\TV(\mathcal L_1,\mathcal L_2)$ means the total variation distance between distributions $\mathcal L_1$ and $\mathcal L_2$. Unless otherwise specified, vectors are column vectors. $\Phi$ denotes the cumulative distribution function of the standard Gaussian distribution $\mathcal N(0,1)$.
\section{Floodgate for categorical response}
\label{sec:floodgate-categorical}
\subsection{The expected total variation distance}
\label{sec:etv}
Let $(X,Y,Z)$ be a random vector with three components. We would like to quantify the effect size of $X$; that is, the strength of the conditional dependence between $X$ and $Y$ given $Z$. We  propose to use the expected total variation distance (ETV) between $\mathcal L(Y\mid X,Z)$ and $\mathcal L(Y\mid Z)$, defined as
\begin{equation}
\ETV(X,Y,Z)=\frac{1}{1-1/|\mathcal Y|}\e[\TV(\mathcal L(Y\mid X,Z),\mathcal L(Y\mid Z))],
\label{equation:fg_target}
\end{equation}
as the VIM, where $|\mathcal Y|$ is the support size of $Y$ (if $|\mathcal Y|=\infty$, we simply normalize by $1$). The normalizing factor is to ensure the value of ETV to be in $[0,1]$, as stated in Lemma~\ref{lemma:etv-range}. 
\begin{lemma}[Range of ETV]
If $\p(Y\in\mathcal Y)=1$, then $0 \leq \ETV(X,Y,Z)\le1$, where the equality is achieved when $X$, conditional on $Z$, deterministically determines $Y$ and $\p(Y=y\mid Z)=1/|\mathcal Y|$ for all $y\in\mathcal Y$.
\label{lemma:etv-range}
\end{lemma}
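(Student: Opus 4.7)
The plan is to handle the two bounds separately. The lower bound $\ETV(X,Y,Z) \ge 0$ is immediate: total variation is nonnegative and the normalizer $1/(1-1/|\mathcal Y|)$ is positive, so this takes one sentence. The nontrivial piece is the upper bound $\e[\TV(\mathcal L(Y\mid X,Z), \mathcal L(Y\mid Z))] \le 1 - 1/|\mathcal Y|$, which I would reduce to a pointwise-in-$Z$ statement and then dispatch with a convexity argument followed by Cauchy--Schwarz.

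Fix $z$ and write $P_{x,z}$ for $\mathcal L(Y\mid X=x, Z=z)$ and $Q_z$ for $\mathcal L(Y\mid Z=z)$, so the tower property gives $Q_z = \e[P_{X,z}\mid Z=z]$. Since $p\mapsto \TV(p, Q_z)$ is convex on the probability simplex, expanding $P_{x,z} = \sum_y P_{x,z}(y)\,\delta_y$ and using $\TV(\delta_y, Q_z) = 1 - Q_z(y)$ yields
\[ \TV(P_{x,z}, Q_z) \le \sum_y P_{x,z}(y)\,(1 - Q_z(y)). \]
Taking expectation over $X$ given $Z=z$ collapses the right-hand side to $1 - \sum_y Q_z(y)^2$, and Cauchy--Schwarz (applied to $Q_z$ on $\mathcal Y$) gives $\sum_y Q_z(y)^2 \ge 1/|\mathcal Y|$. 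Integrating over $Z$ and dividing by $1 - 1/|\mathcal Y|$ completes the bound; note that when $|\mathcal Y|=\infty$ the same chain still yields $\ETV \le 1$ vacuously.

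For the equality case I would just verify the stated sufficient conditions directly: if $\p(Y = f(X,Z)\mid X,Z) = 1$ for some function $f$, then $\mathcal L(Y\mid X,Z)$ is almost surely the point mass $\delta_{f(X,Z)}$, and if additionally $\mathcal L(Y\mid Z) = \Unif(\mathcal Y)$, then $\TV(\delta_{f(X,Z)}, \Unif(\mathcal Y)) = 1 - 1/|\mathcal Y|$ deterministically, so after the expectation and normalization $\ETV$ collapses to $1$. The only even mildly delicate ingredient is the convexity step for $\TV(\cdot, Q_z)$, but it is standard, so I do not foresee any serious obstacle.
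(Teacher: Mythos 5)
Your proof is correct, and it reaches the same intermediate bound as the paper --- namely $\e[\TV(\mathcal L(Y\mid X,z),\mathcal L(Y\mid z))\mid Z=z]\le \sum_y s(y\mid z)\bigl(1-s(y\mid z)\bigr)=1-\sum_y s(y\mid z)^2$, followed by the identical Cauchy--Schwarz finish --- but via a different key lemma. The paper first writes $2\TV$ as the $L_1$ sum over $y$, exchanges that sum with the expectation over $X$, and then bounds each coordinate separately using a mean-absolute-deviation inequality for $[0,1]$-valued random variables: if $W\in[0,1]$ has mean $\mu$, then $\e|W-\mu|\le 2\mu(1-\mu)$ (its Lemma~\ref{lemma:simple-1}, applied to $W=r(y\mid X,z)$ with $\mu=s(y\mid z)$). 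You instead bound $\TV$ \emph{pointwise in $x$, before any expectation}, via convexity of $\TV(\cdot,Q_z)$ over mixtures of point masses, which amounts to $\TV(P,Q)\le 1-\sum_y P(y)Q(y)$; linearity of expectation then does the rest. Your route is arguably more self-contained (it needs only the triangle inequality for the $L_1$ norm and $\TV(\delta_y,Q)=1-Q(y)$, with no probabilistic lemma about deviations from the mean), while the paper's MAD lemma is stated in a slightly sharper form (carrying an extra factor $1-\p(W=\mu)$) that it does not actually exploit here. Your treatment of the equality case and of $|\mathcal Y|=\infty$ matches the paper's claims, and the direct verification that a deterministic $Y$ with uniform $\mathcal L(Y\mid Z)$ attains the bound is a welcome addition, since the paper only asserts it. No gaps.
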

Note that if $X$ is continuous and $\p(Y=y\mid Z)=1/|\mathcal Y|$ for all $y\in\mathcal Y$, then there always exists $\mathcal L(X\mid Z)$ such that $Y$ is a deterministic function of $(X, Z)$. This means that for any support $\mathcal Y$, there exists $(X,Y,Z)$ such that $\ETV(X,Y,Z)$ is equal to $1$ for finite $|\mathcal Y|$, or gets arbitrarily close to $1$ for $|\mathcal Y|=\infty$.

The ETV has several desirable properties that distinguish it from other VIMs in the literature: 
\begin{enumerate}
    \item ETV  captures all possible effects of $X$ on $Y$ conditional on $Z$, including both linear and non-linear effects.
    \item ETV has a very simple and intuitive form and does not depend on any pre-specified model or kernel function (that is, it is model-free).
    \item $\ETV$ attains its minimum value zero if and only if $X\ci Y\mid Z$.
    \item Particularly ideal for categorical $Y$, the value of ETV does not change under one-to-one transformation of $Y$.
\end{enumerate}
ETV's simple form should already make it very easy to interpret conceptually. The key component is the total variation distance, which is also the Wasserstein distance with 0-1 loss and half the the $L_1$ distance between probability density or mass functions. To visually demonstrate the ETV, we consider an example where $Y$ could be one of 5 categories $s_1,\dots,s_5$. In Figure~\ref{figure:etv}, the red bars represent the probability of $Y$ being from each category conditional on $(X,Z)$ taking on some particular value $(x,z)$, the blue bars represent the probability of $Y$ taking each value conditional on $Z=z$, and the yellow bars represent the difference. The expected sum of the yellow bars, when averaged over $(X,Z)$, is equal to $2(1-1/|\mathcal Y|)\ETV$.
\begin{figure}[H]
    \centering
\includegraphics[width = 0.9\textwidth]{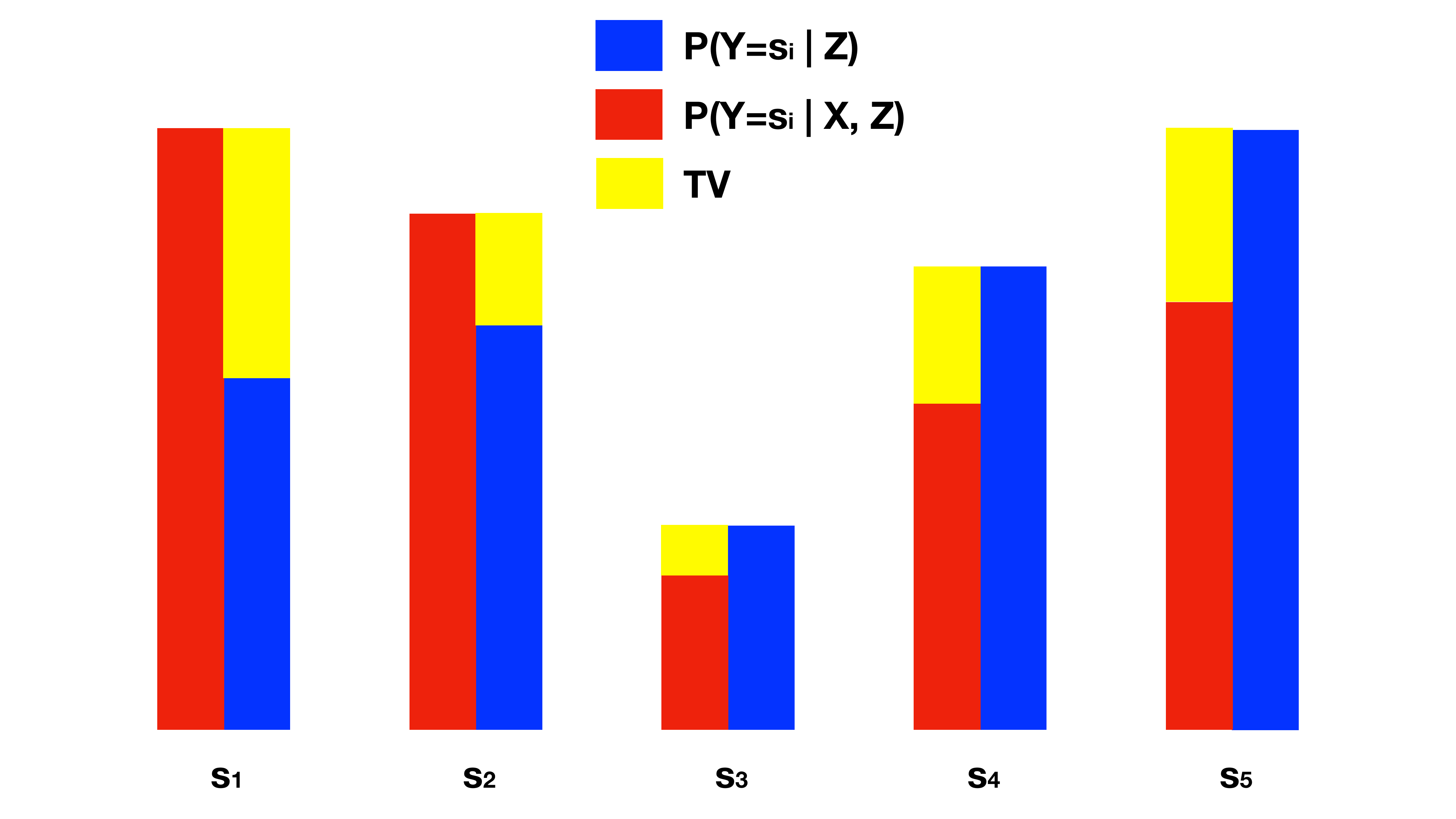}
    \caption{ETV illustration. The expected sum of the yellow bars, when averaged over $(X,Z)$, is equal to $2(1-1/|\mathcal Y|)\ETV$.}
    \label{figure:etv}
\end{figure}



Finally, we briefly discuss the interpretation of ETV in the context of sensitivity analysis. When $Y\ci X\mid Z$ does not hold, we can hypothesize the existence of a confounding variable $U$, such that $Y\ci X \mid Z,U$. We can define $B(X,Z,U)\ge1$ as the (almost sure) supremum of 
\begin{equation*}
\left\{\frac{p(X \mid U, Z)}{p(X \mid Z)},\frac{p(X \mid Z)}{p(X \mid U, Z)}\right\}.
\end{equation*}
Therefore, $B$ measures the minimal confounding effect that can explain away the conditional non-independence. We can show that $B(X,Z,U)\ge1+2(1-1/|\mathcal Y|)\ETV(X,Y,Z)$ (see Appendix~\ref{sec:sens}). This interpretation is very relevant for low-signal problems such as the genome-wide association studies (GWAS), where most signals are weak and one would want to know the sensitivity of the conditional non-independence.

\subsection{The floodgate algorithm}
\label{sec:fg-etv}
Having introduced ETV as a VIM, we now focus on designing algorithms to do inference on it. Astute readers might already recognize that \eqref{equation:fg_target} has a close relationship with the ability to distinguish the distributions
\begin{equation}
\mathcal L(X,Y,Z)\text{ and }\mathcal L(Y\mid Z)\times\mathcal L(X\mid Z)\times\mathcal L(Z).
\label{equation:2-dist}
\end{equation}
In fact, the ETV measure \eqref{equation:fg_target} exactly corresponds to the optimal error rate when classifying samples into the two populations. We state a general result in Theorem~\ref{theorem:tv} below.

\begin{theorem}
Let $\pi_0$ and $\pi_1$ be two distributions supported on the same continuous or discrete space $\Omega$. Let $\pi$ be the distribution of $(\omega,E)$, where $E\sim\Bern(a)$ and $\omega\mid E\sim\pi_E$. Then for any $f:\Omega\to[0,1]$,
\begin{equation*}
2\left(1-\e_\pi\left[\frac1aI(E=1)(1-f(\omega))+\frac{1}{1-a}I(E=0)f(\omega)\right]\right)\le\TV(\pi_1,\pi_0),
\end{equation*}
where the equality could be achieved by an optimal $f$.
\label{theorem:tv}
\end{theorem}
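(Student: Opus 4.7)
The plan is to recognize the left-hand side as twice a standard variational dual for total variation distance. First, I would simplify the expectation by the tower property, conditioning on $E$. Since $\p(E=1)=a$ and $\omega\mid E=i\sim\pi_i$, the weights $1/a$ and $1/(1-a)$ exactly cancel the marginal probabilities of $E$, giving
\begin{equation*}
\e_\pi\!\left[\tfrac1aI(E=1)(1-f(\omega))+\tfrac1{1-a}I(E=0)f(\omega)\right]=\e_{\pi_1}[1-f]+\e_{\pi_0}[f]=1-\e_{\pi_1}[f]+\e_{\pi_0}[f].
\end{equation*}
Consequently $1-\e_\pi[\cdots]=\e_{\pi_1}[f]-\e_{\pi_0}[f]$, so the claim reduces to showing $2(\e_{\pi_1}[f]-\e_{\pi_0}[f])\le\TV(\pi_1,\pi_0)$ for every measurable $f:\Omega\to[0,1]$.

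Next I would invoke the classical variational expression for total variation. Letting $\mu$ be any common dominating measure (for instance $\mu=(\pi_0+\pi_1)/2$) with densities $p_0,p_1$, I write
\begin{equation*}
\e_{\pi_1}[f]-\e_{\pi_0}[f]=\int f(\omega)\,(p_1(\omega)-p_0(\omega))\,d\mu(\omega)\le\int_{\{p_1>p_0\}}(p_1-p_0)\,d\mu=\tfrac12\int|p_1-p_0|\,d\mu,
\end{equation*}
where the bound uses $f\le1$ on $\{p_1>p_0\}$ and $f\ge0$ on $\{p_1\le p_0\}$, and the last equality is the standard symmetry identity for the signed measure $\pi_1-\pi_0$. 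The rightmost integral is $\tfrac12\TV(\pi_1,\pi_0)$ in the (full-$L_1$) convention, so multiplying through by $2$ delivers the advertised inequality.

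For the equality claim, I would take $f^\star(\omega)=I(p_1(\omega)>p_0(\omega))\in\{0,1\}\subset[0,1]$. Both pointwise inequalities above then collapse to equalities $\mu$-almost everywhere, so the bound is attained by $f^\star$. The main obstacle is really just bookkeeping: there are no deep analytic steps here, but one must track the factor of $2$ through the $\TV$ normalization and note that working with a common dominating measure automatically covers both the continuous and discrete cases the theorem allows.
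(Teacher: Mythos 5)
Your proposal is correct and follows essentially the same route as the paper's proof: both condition on $E$ to cancel the $1/a$ and $1/(1-a)$ weights, reduce the quantity of interest to the linear functional $\e_{\pi_1}[f]-\e_{\pi_0}[f]$, and optimize over $f\in[0,1]$ to obtain the likelihood-ratio indicator $I(\pi_1>\pi_0)$, whose value is identified with the total variation distance. Your version is slightly more careful in making the dominating measure and the factor-of-two normalization of $\TV$ explicit, but the argument is the same.
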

Armed with this observation, we can convert the task of inferring ETV into inferring the classification error rate of samples from the two distributions in \eqref{equation:2-dist}. We could collect samples from $\mathcal L(X,Y,Z)$, and the design-based/model-X approach \citep{rubin1974estimating,holland1986statistics,janson2017model,candes2018panning} allows us to obtain samples from $\mathcal L(Y\mid Z)\times\mathcal L(X\mid Z)\times\mathcal L(Z)$ by utilizing our ability to sample from $\mathcal L(X\mid Z)$. 
In randomized experiments \citep{rubin1974estimating,holland1986statistics}, $\mathcal L(X\mid Z)$ is known by design, such as the example of conjoint analysis we present in Section~\ref{sec:conjoint}. Even for some observational data sets, $\mathcal L(X\mid Z)$ can be estimated accurately from unlabeled samples of $(X, Z)$ without $Y$. This assumption can be further relaxed if one is willing to assume certain parametric models for $\mathcal L(X\mid Z)$
\citep[Section 3.2]{LZ-LJ:2020}, and the floodgate approach for other VIMs has been extended to work under doubly robust assumptions, but such an extension is beyond the scope of this paper.

Assuming it is possible to sample from $\mathcal L(X\mid Z)$, it is then straightforward to apply Theorem~\ref{theorem:tv} to this specific case.
\begin{corollary}
Let $(X^{(0)},Y,Z)\sim p(z)p(x|z)p(y|x,z)$ and $X^{(1)},\dots,X^{(J)}\mid X^{(0)},Y,Z\simiid p(x|Z)$. For any $f:(\mathcal X, \mathcal Y, \mathcal Z)\to[0,1]$, 
\begin{multline*}
2\left(1-\e\left[1-f(X^{(0)},Y,Z)+\frac{1}{J}\sum_{j=1}^Jf(X^{(j)},Y,Z)\right]\right)\\
\le\int \Big |p(y|x,z)-p(y|z)\Big| p(x|z)p(z)\di x\di y\di z.
\end{multline*}
\label{corollary:fg}
\end{corollary}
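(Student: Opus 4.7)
The plan is to instantiate Theorem~\ref{theorem:tv} at the two distributions appearing in \eqref{equation:2-dist} and then translate its abstract expectation into the sample-based expectation on the LHS of the corollary.

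Concretely, I would take $\Omega = \mathcal X \times \mathcal Y \times \mathcal Z$, let $\pi_1$ be the joint law of $(X,Y,Z)$ with density $p(z)p(x\mid z)p(y\mid x,z)$, and let $\pi_0$ be the product law $\mathcal L(Y\mid Z)\times\mathcal L(X\mid Z)\times\mathcal L(Z)$, whose density is $p(z)p(x\mid z)p(y\mid z)$. The key distributional check is that $(X^{(0)},Y,Z)\sim\pi_1$ (immediate by construction) and that each $(X^{(j)},Y,Z)\sim\pi_0$ for $j\ge 1$. The latter follows by marginalizing $X^{(0)}$ out of the joint density $p(z)p(x^{(0)}\mid z)p(y\mid x^{(0)},z)p(x^{(j)}\mid z)$, which yields $p(z)p(x^{(j)}\mid z)p(y\mid z)$ because $\int p(x^{(0)}\mid z)p(y\mid x^{(0)},z)\,\di{x^{(0)}} = p(y\mid z)$. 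Although the $X^{(j)}$'s are coupled through $(Y,Z)$ and hence not mutually independent, linearity of expectation still gives $\e\!\left[\tfrac1J\sum_{j=1}^J f(X^{(j)},Y,Z)\right] = \e_{\pi_0}[f]$.

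Applying Theorem~\ref{theorem:tv} with any $a\in(0,1)$, the inverse-propensity-weighted term collapses to $\e_\pi[\tfrac1a I(E=1)(1-f(\omega))+\tfrac1{1-a}I(E=0)f(\omega)] = \e_{\pi_1}[1-f]+\e_{\pi_0}[f]$, and by the previous step this equals $\e\!\left[1-f(X^{(0)},Y,Z) + \tfrac1J\sum_{j=1}^J f(X^{(j)},Y,Z)\right]$, reproducing the LHS of the corollary. Meanwhile $\TV(\pi_1,\pi_0)$ expands, using the densities of $\pi_1$ and $\pi_0$, to $\int|p(y\mid x,z)-p(y\mid z)|p(x\mid z)p(z)\di x\di y\di z$, which matches the RHS of the corollary. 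The whole argument is bookkeeping atop Theorem~\ref{theorem:tv}; the only nontrivial verification is the marginalization that identifies the law of $(X^{(j)},Y,Z)$ for $j\ge1$ as $\pi_0$, which is where the model-X/design-based assumption of being able to sample from $\mathcal L(X\mid Z)$ enters.
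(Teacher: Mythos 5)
Your proposal is correct and takes essentially the same route as the paper: both instantiate Theorem~\ref{theorem:tv} with $\pi_1=\mathcal L(X,Y,Z)$ and $\pi_0=\mathcal L(Y\mid Z)\times\mathcal L(X\mid Z)\times\mathcal L(Z)$, and your marginalization check that $(X^{(j)},Y,Z)\sim\pi_0$ for $j\ge1$ is exactly the needed ingredient. The only cosmetic difference is that the paper realizes the mixture $\pi$ explicitly via a uniform random index $K\sim\Unif\{0,\dots,J\}$ (so $a=1/(J+1)$) and then conditions on $K$, whereas you observe directly that the weighted expectation collapses to $\e_{\pi_1}[1-f]+\e_{\pi_0}[f]$ for any $a$ and finish by linearity.
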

Taking advantage of Corollary~\ref{corollary:fg}, we can design an algorithm that produces a lower confidence bound of \eqref{equation:fg_target} using the central limit theorem.
\begin{algorithm}
\caption{Floodgate for categorical responses.}
\label{algorithm:cat-fg}
\textbf{Input}: An i.i.d. data set $(X_i,Y_i,Z_i)_{i=1}^n$, conditional distribution $\mathcal L(X\mid Z)$ and a classifier $f:(\mathcal X, \mathcal Y, \mathcal Z)\to[0,1]$, number of resamples $J$, $Y$'s support size $|\mathcal Y|$, confidence level $\alpha\in(0,1)$\\
\textbf{Output}: a lower confidence bound for $\ETV(X,Y,Z)$\\
	{
		\For{$i=1$ \KwTo $n$} {
        Draw $X_i^{(1)},\dots,X_i^{(J)}\simiid\mathcal L(X\mid Z=Z_i)$.\\
        Set $(Y_i^{(j)},Z_i^{(j)})=(Y_i,Z_i)$, $E_i=1$ and $E_i^{(j)}=0$, $1\le j\le J$.\\
        $L_i\leftarrow(|f(X_i,Y_i,Z_i)-1|+(1/J)\sum_{j=1}^J|f(X_i^{(j)},Y_i^{(j)},Z_i^{(j)})-0|)$.
	}
	}
    $\bar{L}\leftarrow\sum_{i=1}^nL_i/n$\\
	$\bar{L^2}\leftarrow\sum_{i=1}^nL_i^2/n$\\
	Return $L_n^\alpha(f)=\max(0, (1-\bar{L}-z_\alpha\sqrt{\bar{L^2}-\bar{L}^2}/\sqrt n)/(1-1/|\mathcal Y|))$, where $z_\alpha$ satisfies $1-\Phi(z_\alpha)=\alpha$.
\end{algorithm}
The algorithm works by first choosing a classification function $f$ as in Corollary~\ref{corollary:fg}, then using the sample mean and variance of classification error to produce a lower confidence bound for the real classification accuracy rate, which is itself a lower bound of the optimal classification accuracy rate and a re-scaled ETV. This idea of producing a lower confidence bound of a lower bound of the quantity of interest is metaphorically termed ``floodgate'' in \citet{LZ-LJ:2020}, hence the name of Algorithm~\ref{algorithm:cat-fg}. Note that there is an oracle $f$ that provides the best lower confidence bound, in the sense given in Theorem~\ref{theorem:cat-fg-valid}. Thus, the coverage of Algorithm~\ref{algorithm:cat-fg} can be tight.
\begin{theorem}[Validity of Algorithm~\ref{algorithm:cat-fg}]
For any given $f$ and $\alpha\in(0,1)$, $\lim_{n\to\infty}\p(\ETV\ge L_n^\alpha(f))\ge1-\alpha$. Additionally, $\lim_{n\to\infty}\p(\ETV\ge L_n^\alpha(f_\textnormal{oracle}))=1-\alpha$, where
\begin{equation}
f_\textnormal{oracle}(x,y,z)=\mathbb I(p(y\mid x,z)>p(y\mid z))=\mathbb I\left(\frac{p(y\mid x,z)}{p(y\mid x,z)+p(y\mid z)}>0.5\right).
\label{equation:f-oracle}
\end{equation}
\label{theorem:cat-fg-valid}
\end{theorem}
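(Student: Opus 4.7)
The plan is to combine the deterministic inequality of Corollary~\ref{corollary:fg} with a central limit theorem applied to the i.i.d.\ summands $L_1,\dots,L_n$, and then argue that the inequality becomes equality exactly when $f = f_{\textnormal{oracle}}$.

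First I would simplify $L_i$. Since $f$ is $[0,1]$-valued, the absolute values collapse and $L_i = (1 - f(X_i,Y_i,Z_i)) + (1/J)\sum_{j=1}^J f(X_i^{(j)}, Y_i, Z_i) \in [0,2]$. Using $X_i^{(j)} \ci (X_i, Y_i) \mid Z_i$, the mean $\mu := \e[L_i]$ is exactly the expectation appearing inside Corollary~\ref{corollary:fg}, so the corollary gives $2(1-\mu) \le 2(1 - 1/|\mathcal Y|)\ETV$, i.e., $(1-\mu)/(1-1/|\mathcal Y|) \le \ETV$. Since $L_i$ is bounded, the CLT plus Slutsky (with $\hat\sigma_n^2 := \bar{L^2} - \bar L^2 \to \Var(L_i)$ a.s.) gives $\sqrt n(\bar L - \mu)/\hat\sigma_n \cid \mathcal N(0,1)$ when $\Var(L_i) > 0$, and combining with $\Phi(z_\alpha) = 1-\alpha$ yields $\lim_n \p(\mu \le \bar L + z_\alpha \hat\sigma_n/\sqrt n) = 1 - \alpha$. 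On this event, $\ETV \ge (1 - \bar L - z_\alpha \hat\sigma_n/\sqrt n)/(1-1/|\mathcal Y|) \ge L_n^\alpha(f)$---the last step using $\ETV \ge 0$, so the $\max(0,\cdot)$ clip only improves coverage. The degenerate case $\Var(L_i) = 0$ forces $L_i \equiv \mu$ a.s.\ and is handled trivially.

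For the oracle part, I would revisit the equality condition in Theorem~\ref{theorem:tv}: rewriting $1 - \mu = \int f(x,y,z)[p(x,y,z) - p(x \mid z)p(y \mid z)p(z)]\,dx\,dy\,dz$, the $[0,1]$-valued maximizer is the indicator of the set where the bracket is positive, which via $p(x,y,z) = p(y \mid x,z)p(x \mid z)p(z)$ reduces to exactly $f_{\textnormal{oracle}}(x,y,z) = \mathbb I(p(y \mid x,z) > p(y \mid z))$. This makes the deterministic step tight, i.e., $(1-\mu)/(1-1/|\mathcal Y|) = \ETV$, so provided $\ETV > 0$ the chain of inequalities above collapses to an equality (the clip is asymptotically inactive since the unclipped bound converges a.s.\ to $\ETV > 0$), giving $\lim_n \p(\ETV \ge L_n^\alpha(f_{\textnormal{oracle}})) = 1 - \alpha$.

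The main technical points are the conditional independence computation (which turns the $J$-resample average into the product-measure expectation required by Corollary~\ref{corollary:fg}) and the $\max(0,\cdot)$ clip, which only tightens the bound; it has no effect on the $\ge 1-\alpha$ guarantee, and for the oracle it is asymptotically inactive under $\ETV > 0$. Otherwise everything is a clean combination of Corollary~\ref{corollary:fg}, the CLT, and identification of the Bayes classifier.
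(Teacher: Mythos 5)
Your proposal is correct and follows essentially the same route the paper takes: it invokes Corollary~\ref{corollary:fg} to identify $\e[L_i]$ with the expectation in the corollary, applies the CLT to the bounded i.i.d.\ summands $L_i$, and identifies $f_\textnormal{oracle}$ as the Bayes classifier from the equality condition in Theorem~\ref{theorem:tv}. Your explicit handling of the $\max(0,\cdot)$ clip and the $\ETV>0$ proviso for exact coverage is if anything slightly more careful than the paper's one-line sketch.
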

The proof of Theorem~\ref{theorem:cat-fg-valid} follows directly from Corollary~\ref{corollary:fg} and the central limit theorem once we note that $L_i$'s are bounded and independent and identically distributed. The $f_\textnormal{oracle}$ that achieves exact coverage is constructed in the proof of Corollary~\ref{corollary:fg}.

A natural question that the reader may have is whether we could provide an \emph{upper} confidence bound for the ETV. We present Theorem~\ref{theorem:impossible-upper-bound}, which states that in some sense the answer is no: a generic confidence upper bound of the ETV must simply cover the theoretical upper bound even under the most ideal scenario: no $Z$ variable, $X$ and $Y$ independent, $X$'s distribution is known, and $Y$'s distribution is uniform.
\begin{theorem}
Let $(X_i,Y_i)_{i=1}^n$ be i.i.d. samples from $\mathcal L$, where the marginal distribution of $Y_i$ is $\Unif(\{1,\dots,K\})$. Let $C_{\mathcal L_X}$ be an algorithm tailored for the marginal distribution of $X_i$ that takes $(X_i,Y_i)_{i=1}^n$ as input and produces a confidence upper bound, such that $\p_{\mathcal L}(C_{\mathcal L_X}(X_{1:n}, Y_{1:n})\ge\ETV(X,Y))\ge1-\alpha$, $\alpha\in(0,1)$, for any $\mathcal L$ that respects the marginal distributions $\mathcal L_X$, where $\ETV(X,Y)$ is \eqref{equation:fg_target} with an empty $Z$. Then,
\begin{equation}
\p(C_{\mathcal L_X}(X_{1:n}, Y_{1:n})\ge1)\ge1-\alpha
\label{equation:upper-bound}
\end{equation}
when $(X_i,Y_i)\simiid\mathcal L_X\times\Unif(\{1,\dots,K\})$, where $\mathcal L_X$ is a continuous distribution and $1$ is the theoretical ETV upper bound given by Lemma~\ref{lemma:etv-range}.
\label{theorem:impossible-upper-bound}
\end{theorem}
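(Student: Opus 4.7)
The plan is a Le~Cam--style reduction: I approximate $P_0^n\define(\mathcal L_X\times\Unif(\{1,\dots,K\}))^n$ in total variation by mixtures of deterministic alternatives $P_L^n$, each satisfying $\ETV(P_L)=1$, and then invoke the hypothesis separately on every mixand. Concretely, for each integer $N\ge 1$ set $M=KN$, and (using that $\mathcal L_X$ is continuous) partition the support of $X$ into $M$ Borel bins $B_1,\dots,B_M$ each of $\mathcal L_X$-measure $1/M$. Let $\mathcal L_M$ denote the set of labelings $L:\{1,\dots,M\}\to\{1,\dots,K\}$ using each of the $K$ labels exactly $N$ times, and define the bin-constant map $g_L(x)=L_{m(x)}$, where $m(x)$ is the index of the bin containing $x$. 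Let $P_L$ be the joint law of $(X,g_L(X))$ with $X\sim\mathcal L_X$.

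Each $P_L$ has marginal $\mathcal L_X$, $Y$ is a deterministic function of $X$, and $g_L(X)\sim\Unif(\{1,\dots,K\})$, so Lemma~\ref{lemma:etv-range} gives $\ETV(P_L)=1$. The validity hypothesis therefore yields $\p_{P_L^n}(C_{\mathcal L_X}(X_{1:n},Y_{1:n})\ge 1)\ge 1-\alpha$ for every $L\in\mathcal L_M$; averaging over $L$ drawn uniformly from $\mathcal L_M$ gives the identical bound for the mixture law $Q_M^n$. It remains to show $\TV(Q_M^n,P_0^n)\to 0$ as $N\to\infty$, since then $|\p_{Q_M^n}(C_{\mathcal L_X}\ge 1)-\p_{P_0^n}(C_{\mathcal L_X}\ge 1)|\le\TV(Q_M^n,P_0^n)\to 0$, and passing to the limit delivers \eqref{equation:upper-bound}.

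Both $Q_M^n$ and $P_0^n$ share the $X$-marginal, so it suffices to compare the conditional laws of $Y_{1:n}$ given $X_{1:n}$. Let $\mathcal E_M$ be the event that the $X_i$'s lie in $n$ distinct bins, which has probability at least $1-\binom{n}{2}/M$ under each law. Conditional on $\mathcal E_M$, the law of $Y_{1:n}$ under $Q_M^n$ is a uniformly random without-replacement sample from a multiset containing $N$ copies of each of $K$ labels, while under $P_0^n$ it is i.i.d.\ $\Unif(\{1,\dots,K\})$. A chain-rule comparison of these conditional laws, bounding the per-step TV gap by $O(i/N)$ at the $i$-th draw, controls the conditional TV by $O(n^2/N)$, and combining with the $\mathcal E_M$-tail bound yields $\TV(Q_M^n,P_0^n)=O(n^2/N)\to 0$. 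This TV estimate is the only substantive step; everything else is bookkeeping, and I expect the computation to follow from elementary chain-rule or Stirling-type inequalities rather than any deeper probabilistic machinery.
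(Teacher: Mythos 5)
Your proposal is correct and is essentially the paper's own argument: the paper likewise partitions the $X$-space into $NK$ equal-measure bins, assigns $N$ bins to each of the $K$ labels, and uses the resulting mixture of deterministic, ETV-one labelings (its distribution $D_5$, which coincides with your $Q_M^n$) to approximate the product law. The only difference is bookkeeping in the total-variation control --- the paper interpolates through a chain $D_0,\dots,D_5$ with two birthday-problem bounds, whereas you condition on the $X_i$ landing in distinct bins and compare without-replacement label sampling to i.i.d.\ uniform via the chain rule --- and both yield an error vanishing as $N\to\infty$.
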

 To provide some intuition on Theorem~\ref{theorem:impossible-upper-bound}, we can understand the hardness of producing an upper bound by thinking about the general problem of upper bounding the total variance distance between $\mathcal L_1$ and $\mathcal L_2$. By writing $\TV(\mathcal L_1,\mathcal L_2)=\sup_A|\mathcal L_1(A)-\mathcal L_2(A)|$, we can easily obtain a lower bound for $\TV$ by fixing a non-trivial set $A$, and it is then straightforward to empirically estimate the lower bound $|\mathcal L_1(A)-\mathcal L_2(A)|$. However, in order to upper bound or estimate the actual $\TV$, one would need to be able to consistently estimate the set $A=\argmax_A|\mathcal L_1(A)-\mathcal L_2(A)|$. For the ETV, this translates to consistently estimating the optimal classifier $f$ given by \eqref{equation:f-oracle}, which requires to impose conditions on $\mathcal L(Y\mid X, Z)$. On the other hand, we do not need any such assumptions to produce a lower confidence bound.

Moving back to Algorithm~\ref{algorithm:cat-fg}, the function $f$ in practice would have to be trained on a separate dataset to maintain validity, which is not fully utilizing the whole dataset. Next, we show how to apply cross-validation in a way that every data point is used for inference. 

\paragraph{Data splitting, cross validation and derandomization}
To avoid excluding any data in the inference step, we use the idea of cross-validated floodgate from \citet{zhangfloodgate}. The idea borrows results from central limit theorems for cross-validation \citep{austern2020asymptotics,NEURIPS2020_bce9abf2} and ensures the validity of Algorithm~\ref{algorithm:cat-fg-cv}, a cross-validated version of Algorithm~\ref{algorithm:cat-fg}.
\begin{algorithm}
\caption{Cross-validated floodgate for categorical responses.}
\label{algorithm:cat-fg-cv}
\textbf{Input}: An i.i.d. data set $(X_i,Y_i,Z_i)_{i=1}^n$, conditional distribution $\mathcal L(X\mid Z)$ and a classifier training rule $f$, number of resamples $J$, number of CV folds $k$, $Y$'s support size $|\mathcal Y|$, confidence level $\alpha\in(0,1)$\\
\textbf{Output}: a lower confidence bound for $\ETV(X,Y,Z)$\\
	{Randomly partition the data into $k$ folds $B_1^c,\dots,B_k^c$ with sizes differing by at most one.
		\For{$r=1$ \KwTo $k$} {
        Train a classifier $f_{B_r}$ with data $B_r$, plug in $f_{B_r}$ and data $(X_i,Y_i,Z_i)_{i\in B_r^c}$ to Algorithm~\ref{algorithm:cat-fg}, and record the sample mean and sample variance of the $L$ vector as $\hat\mu_r$ and $\hat\sigma^2_r$.
	}
	$\hat\mu\leftarrow\sum_{r=1}^k\hat\mu_r/k$\\
	$\hat\sigma^2\leftarrow\sum_{r=1}^k\hat\sigma^2_r/k$\\
	}
	Return $L_n^\alpha(f)=\max(0,(1-\hat\mu-z_\alpha\hat\sigma/\sqrt n)/(1-1/|\mathcal Y|))$, where $z_\alpha$ satisfies $1-\Phi(z_\alpha)=\alpha$.
\end{algorithm}

\begin{theorem}[Validity of Algorithm~\ref{algorithm:cat-fg-cv}]
For any given $f$ and $\alpha\in(0,1)$, let
\begin{equation*}
\begin{aligned}
h_n((x,x^{(1:J)},y,z); B_1)&=|f_{B_1}(x,y,z)-1|+\frac1J\sum_{j=1}^J|f_{B_1}(x^{(j)},y,z)|,\\
\bar h_n((x,x^{(1:J)},y,z))&=\e_{B_1}[h_n((x,x^{(1:J)},y,z); B_1)],\\
\sigma_n&=\sqrt{\Var(\bar h_n((X,X^{(1:J)},Y,Z)))}
\end{aligned}
\end{equation*}
where the subscript $B_1$ means taking expectation over $B_1$, which contains the $n(1-1/k)$ training samples for $f_{B_1}$. Assume
\begin{enumerate}[label=(\alph*)]
    \item $\left(\bar h_n((X,X^{(1:J)},Y,Z))-\e[\bar h_n((X,X^{(1:J)},Y,Z))]\right)/\sigma_n^2$ is uniformly integrable;
    \item and the asymptotic linearity condition (2.2) in \citet{NEURIPS2020_bce9abf2} holds in probability:
    \begin{multline*}
    \frac{1}{\sigma_n\sqrt n}\sum_{r=1}^k\sum_{i\in B_r^c}\Big((h_n(X_i,X^{1:J}_i,Y_i,Z_i); B_r)-\e[h_n(X_i,X^{1:J}_i,Y_i,Z_i); B_r)\mid B_r]\\
    -\left(\bar h_n((X,X^{(1:J)},Y,Z))-\e[\bar h_n((X,X^{(1:J)},Y,Z))]\right)\Big)\cip0,
    \end{multline*}
\end{enumerate}
then $\lim_{n\to\infty}\p(\ETV\ge L_n^\alpha(f))\ge1-\alpha$. Additionally, $\lim_{n\to\infty}\p(\ETV\ge L_n^\alpha(f_\textnormal{oracle}))=1-\alpha$, where $f_\textnormal{oracle}$ is given by \eqref{equation:f-oracle}.
\label{theorem:cat-fg-cv-valid}
\end{theorem}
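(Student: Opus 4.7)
The plan is to reduce the claim to the cross-validated central limit theorem of \citet{NEURIPS2020_bce9abf2}, applied to the studentized sample mean of the $L_i$'s, combined with a fold-conditional application of Corollary~\ref{corollary:fg}. First, fix a fold index $r$ and condition on the training data $B_r$. Then $f_{B_r}$ is a deterministic $[0,1]$-valued function, so Corollary~\ref{corollary:fg} applied to a fresh independent sample $(X,Y,Z)$ with $X^{(1:J)}\mid Z \simiid p(\cdot\mid Z)$ and classifier $f_{B_r}$ gives
$$
\e[h_n((X,X^{(1:J)},Y,Z); B_r)\mid B_r] \;\ge\; 1 - (1-1/|\mathcal{Y}|)\,\ETV,
$$
with equality when $f_{B_r} = f_\textnormal{oracle}$ (the oracle does not depend on the training data). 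Taking expectation over $B_r$ yields $\e[\bar h_n(W)] \ge 1 - (1-1/|\mathcal{Y}|)\ETV$ with equality under the oracle, where I write $W=(X,X^{(1:J)},Y,Z)$.

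Next, I would apply the cross-validated CLT. Note that $\hat\mu = (1/n)\sum_{r=1}^k\sum_{i\in B_r^c} h_n((X_i,X_i^{(1:J)},Y_i,Z_i); B_r)$ is exactly the pooled average considered in \citet{NEURIPS2020_bce9abf2}. Assumption~(a) supplies the Lindeberg-type uniform integrability, and assumption~(b) is their asymptotic linearity hypothesis; together they give
$$
\frac{\sqrt n\,(\hat\mu - \e[\bar h_n(W)])}{\sigma_n}\;\cid\; \mathcal{N}(0,1).
$$
Because $h_n$ is bounded in $[0,2]$, the same machinery applied to second moments shows $\hat\sigma^2 \cip \sigma_n^2$, so by Slutsky's theorem $\sqrt n(\hat\mu - \e[\bar h_n(W)])/\hat\sigma\cid \mathcal{N}(0,1)$.

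The coverage statement now follows by algebra. By definition, $(1-1/|\mathcal{Y}|)L_n^\alpha(f)\le 1 - \hat\mu - z_\alpha\hat\sigma/\sqrt n$, so combined with the lower bound from the first step, the event $\{\ETV \ge L_n^\alpha(f)\}$ is implied by $\hat\mu + z_\alpha\hat\sigma/\sqrt n \ge \e[\bar h_n(W)]$, i.e.\ $\sqrt n(\hat\mu - \e[\bar h_n(W)])/\hat\sigma \ge -z_\alpha$, whose probability tends to $\Phi(z_\alpha) = 1 - \alpha$; this gives the validity bound. Under $f_\textnormal{oracle}$, the inequality in the first step is an equality, so the implications tighten to equivalences and the asymptotic coverage is exactly $1-\alpha$. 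The main obstacle is the variance consistency $\hat\sigma^2/\sigma_n^2 \cip 1$ in the cross-validated setting: $\hat\sigma^2$ is a pooled average of within-fold sample variances of the $L_i$'s, whereas $\sigma_n^2 = \Var(\bar h_n(W))$ is the variance of a conditional expectation over the training fold. Controlling the gap between these is exactly where assumptions (a) and (b) do the real work, paralleling the variance-estimator analysis in \citet{NEURIPS2020_bce9abf2}.
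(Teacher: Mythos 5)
Your proposal is correct and follows the same route the paper (implicitly) takes: the paper gives no written-out proof of this theorem, instead invoking the cross-validated CLT of \citet{NEURIPS2020_bce9abf2} under assumptions (a) and (b) together with the fold-conditional application of Corollary~\ref{corollary:fg}, which is exactly your argument. Your additional care about the variance consistency $\hat\sigma^2/\sigma_n^2\cip1$ and the equality case under $f_\textnormal{oracle}$ fills in details the paper leaves to the cited reference.
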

Assumption~(a) holds if $\bar h_n((X,X^{(1:J)},Y,Z))$ does not converge to a degenerate distribution. Section~3 in \citet{NEURIPS2020_bce9abf2} discussed some sufficient conditions of assumption~(b). Notably, 
when the number of cross-validation folds $k=O(1)$, then a sufficient condition of (b) is
\begin{equation*}
\frac{\e[\Var[h_n((X,X^{(1:J)},Y,Z); B_1)\mid (X,X^{(1:J)},Y,Z)]]}{\Var(\bar h_n((X,X^{(1:J)},Y,Z)))}\to0\text{ in probability}.
\end{equation*}
Assuming the denominator converges to a positive constant, this condition says that the out-of-sample loss is asymptotically stable over randomness of the training sample. Because $h_n$ is bounded, the conditions of Theorem~\ref{theorem:cat-fg-cv-valid} hold if there exists $f_*$ such that $f_{B_1}(x,y,z)\to f_*(x,y,z)$ in probability, uniformly for any $(x,y,z)$.

\subsection{Classification function}
\label{sec:c}
In this section, we discuss how to train the function $f$ in Algorithms~\ref{algorithm:cat-fg} and \ref{algorithm:cat-fg-cv}.

By looking at the ultimate goal of $f$, which is to predict whether $X$ is a resample or the original sample, a greedy approach is to train $f$ by regressing $E$ on $(X,Y,Z)$ using samples
\begin{equation*}
(E_i^{(j)}, (X_i^{(j)},Y_i,Z_i)), i=1,\dots,n, j=0,\dots,J,
\end{equation*}
where $X_i^{(0)}=X_i$ and $E_i^{(j)}=\mathbb I(j=0)$. However, this approach is ignoring important structural information. From the proof of Corollary~\ref{corollary:fg}, the oracle $f$ that minimizes the expected error rate is the one given in \eqref{equation:f-oracle}, which motivates the following choice of $f$ in practice
\begin{equation}
f(x,y,z;p_{\theta_1},p_{\theta_2},c)=\left\{\begin{aligned}
&1, &\hat p_i>0.5+c,\\
&0, &\hat p_i<0.5-c\\
&0.5, &|\hat p_i-0.5|\le c,
\end{aligned}\right.
\label{equation:f-c}
\end{equation}
where
\begin{equation*}
\hat p_i = \frac{p_{\hat\theta_1}(y\mid x,z)}{p_{\hat\theta_1}(y\mid x,z)+p_{\hat\theta_2}(y\mid z)},
\end{equation*}
$\hat\theta_1$ and $\hat\theta_2$ are parameter estimates of working models $p_{\theta_1}(y\mid x,z)$ and $p_{\theta_2}(y\mid z)$ and $c$ acknowledges the estimation error and gives an extra degree of freedom. The working models can be from any model family, including simple generalized linear models and fancy machine learning models. The logic behind such $f$ is that we classify the sample to the population $0$ or $1$ that has the higher estimated likelihood, but when the two likelihoods are close and we are not sure, we set it to $0.5$ and essentially discard this one sample. The parameter $c$ controls our comfort level of confidence. We will show the empirical effect of $c$ in Section~\ref{sec:c-sim}. 

\subsection{Generalization to hierarchical responses}
\label{sec:hierarchical}

In some cases, the response $Y$ may have several levels, arranged in a hierarchy. For example, a wolf is also a type of dog, which is also an animal. We can choose to relabel wolf to dog or animal to reflect the relevant level of granularity. It is then straightforward to apply Algorithms~\ref{algorithm:cat-fg} and \ref{algorithm:cat-fg-cv} to the relabeled data. We wish to raise a subtle yet crucial point that one cannot simply drop certain labels. For example, if one only cares about a feature $X$'s ability to distinguish $Y=\text{A}$ from $Y=\text{B}$, one might be tempted to simply drop all samples where $Y\not\in\{\text{A},\text{B}\}$. However, doing so would require one to be able to sample from $\mathcal L(X\mid Z, Y\in\{\text{A},\text{B}\})$ to apply Algorithms~\ref{algorithm:cat-fg} or \ref{algorithm:cat-fg-cv}, which is a different assumption from being able to sample from $\mathcal L(X\mid Z)$.

If all values of $Y$ have the same number of levels, then we can define an overall VIM by weighting all levels. Let $Y=(Y_1,\dots,Y_K)$ have $K$ hierarchy levels, where for any possible values $Y$ and $\tilde Y$, if $Y_k\ne\tilde Y_k$, then $Y_{k'}\ne\tilde Y_{k'}$ for all $k'>k$. For instance, we can let $K=3$ and $Y_1,Y_2,Y_3$ be the taxonomic ranks of family, genus and species. Next, we define a VIM at each hierarchical level $k>1$ by
\begin{equation*}
\HETV_k(X,Y,Z)=(1-1/|\mathcal Y_{1:k}|)\ETV(X,Y_{1:k},Z)-(1-1/|\mathcal Y_{1:{k-1}}|)\ETV(X,Y_{1:(k-1)},Z),
\end{equation*}
where we add back the normalizing constant to ensure $\HETV_k(X,Y,Z)\ge0$. A sufficient condition of $\HETV_k(X,Y,Z)=0$ is
\begin{equation*}
\mathcal L(Y_k \mid X, Y_{1:(k-1)}, Z)=\mathcal L(Y_k\mid Y_{1:(k-1)}, Z), \text{equivalently } \mathcal L(X \mid Y_{1:k}, Z)=\mathcal L(X\mid Y_{1:(k-1)}, Z),
\end{equation*}
so we can interpret $\HETV_k(X,Y,Z)$ as an ETV-based VIM of $X$ at hierarchical level $k$. Finally, we define an overall VIM of $X$ by aggregating $\ETV(X,Y_1,Z)$ and HETV at all others levels with a user-specified weight vector $w$:
\begin{equation*}
\HETV_w(X,Y,Z)=w_1(1-1/|\mathcal Y_1|)\ETV(X,Y_1,Z)+\sum_{k=2}^Kw_k\HETV_k(X,Y,Z).
\end{equation*}

We can then modify Algorithm~\ref{algorithm:cat-fg} to support HETV, as below.

\begin{algorithm}[h]
\caption{Floodgate for hierarchically weighted categorical responses.}
\label{algorithm:cat-fg-hier}
\textbf{Input}: An i.i.d. data set $(X_i,Y_i,Z_i)_{i=1}^n$, conditional distribution $\mathcal L(X\mid Z)$ and classifiers $f_k:(\mathcal X, \mathcal Y_{1:k}, \mathcal Z)\to[0,1]$, number of resamples $J$, confidence level $\alpha\in(0,1)$\\
\textbf{Output}: a lower confidence bound for $\ETV_w(X,Y,Z)$\\
	{
		\For{$i=1$ \KwTo $n$} {
        Draw $X_i^{(1)},\dots,X_i^{(J)}\simiid\mathcal L(X\mid Z=Z_i)$.\\
        Set $(Y_i^{(j)},Z_i^{(j)})=(Y_i,Z_i)$, $E_i=1$ and $E_i^{(j)}=0$, $1\le j\le J$.\\
        \For{$k=1$ \KwTo $K$}{
            $L_{i,k}\leftarrow (|f_k(X_i,Y_{1:k,i},Z_i)-1|+(1/J)\sum_{j=1}^J|f_k(X_i^{(j)},Y_{1:k,i}^{(j)},Z_i^{(j)})-0|)$.
        }
        $L_{i}\leftarrow w_1L_{i,1}+\sum_{k=2}^Kw_k(L_{i,k}-L_{i,k-1})$.
	}
	}
    $\bar{L}\leftarrow\sum_{i=1}^nL_i/n$\\
	$\bar{L^2}\leftarrow\sum_{i=1}^nL_i^2/n$\\
	Return $L_n^\alpha(f)=\max(0, 1-\bar{L}-z_\alpha\sqrt{\bar{L^2}-\bar{L}^2}/\sqrt n)$, where $z_\alpha$ satisfies $1-\Phi(z_\alpha)=\alpha$.
\end{algorithm}

In the same way, we could also modify Algorithm~\ref{algorithm:cat-fg-cv} to work for HETV.

\subsection{Relationship with literature}
\label{sec:comp}
Having introduced the definition of ETV and algorithms to produce its lower confidence bounds, we pause to discuss two recent works in the literature that have connections to our work.
\paragraph{Connection to MACM in \citet{LZ-LJ:2020}}
\citet{LZ-LJ:2020} defined MACM for the specific case where $Y\in\{1,-1\}$, which has exactly twice the value of ETV. Their inference \citep[Algorithm~3]{LZ-LJ:2020} is equivalent to our Algorithm~\ref{algorithm:cat-fg} with
\begin{equation}
f(X,Y,Z)=\left\{\begin{aligned}
&\mathbb I(\mu(X,Z)\ge\e[\mu(X,Z)\mid Z]),\text{ if }Y=1,\\
&\mathbb I(\mu(X,Z)\le\e[\mu(X,Z)\mid Z]),\text{ if }Y=-1.
\end{aligned}\right.
\label{equation:macm}
\end{equation}
The details are deferred to Appendix~\ref{sec:macm}.
\paragraph{Connection to $\hat\lambda^\rho_\text{bayes}$ in \citet{michel2022high}}
\citet{michel2022high} studied lower confidence bounds for $\TV(P,Q)$ based on i.i.d. samples from $P$ and $Q$. One of their proposed estimators, $\hat\lambda^\rho_\text{bayes}$ in \citet[Proposition~3]{michel2022high}, is based on the same classification idea as Algorithm~\ref{algorithm:cat-fg}. Specifically, \citet{michel2022high} also utilized the relationship between the classification accuracy and the total variation distance, and $\hat\lambda^\rho_\text{bayes}$ is constructed based on this fact for a fixed classification function $\rho_t(x)=\mathbb I(\rho(x)>t)$ with $t=0.5$. Similar to our discussion around the parameter $c$ in Section~\ref{sec:c}, \citet{michel2022high} showed that there may exist better choices for $t$ in $\rho_t(x)$ than the natural $t=0.5$, depending on prior knowledge of $P$ and $Q$. While we propose to use cross-validation to choose $c$, \citet{michel2022high} went on to consider estimators very different from $\hat\lambda^\rho_\text{bayes}$.
While our method shares the same construction idea as $\hat\lambda^\rho_\text{bayes}$ 
 in \citet{michel2022high}, the key difference between the two works is the problem setting. \citet{michel2022high} studied two-sample testing, where the samples are naturally labeled with auxiliary information; our work is centered around the ETV, which is a novel VIM defined through a sample-labeling mechanism based on $\mathcal L(X\mid Z)$. 
\section{Simulations}
\label{sec:sim}
\subsection{Floodgate with different classification functions}
\label{sec:sim-f}
In this section, we consider the model
\begin{equation}
    \begin{aligned}
        Y\mid X&\sim\Bern(\Phi(X^\top\beta)),\\
        X&\sim\mathcal N(0,\Sigma),
    \end{aligned}
\label{equation:simulation-model}
\end{equation}
where $X$ is a $p$-dimensional column vector and we provide lower confidence bound for $\ETV(X_j,Y,X_{\text-j})$ for each $j$. We choose $\Sigma_{ij}=\rho^{|i-j|}$. We set $p=4$ or $10$, $\beta=(0,1,2,3)$ for $p=4$ and $\beta=(0,0,0,0,1,2,3,4,5,6)$ for $p=10$, $n=100p$, and apply 10-fold cross validation in Algorithm~\ref{algorithm:cat-fg-cv}. We use three types of classification functions as in \eqref{equation:f-c}. For the oracle model, $p_{\theta_1}$ and $p_{\theta_2}$ are set to the true models. For the logistic or tree models, $p_{\theta_1}$ and $p_{\theta_2}$ are logistic or tree models, and $\hat\theta_1$ and $\hat\theta_2$ are trained on cross-validated data. We find that the oracle gives the highest floodgate bound (as expected), and the logistic model is a close second. Even the generic random forest model performs reasonably well.

\begin{figure}[H]
    \centering
\includegraphics[width = 1\textwidth]{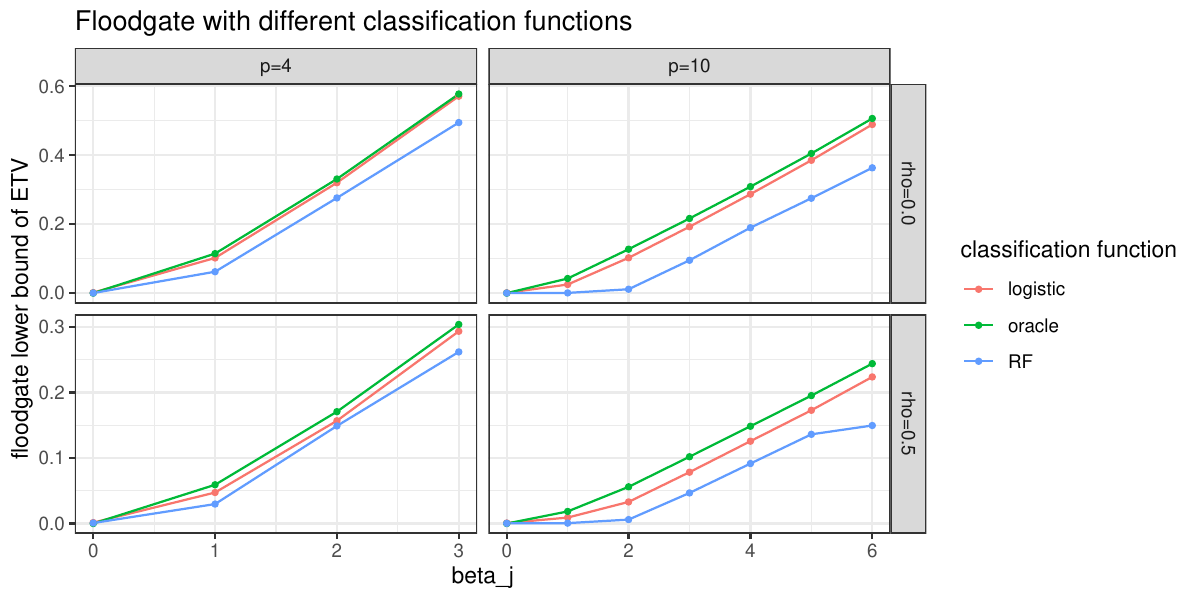}
    \caption{Floodgate lower bound with different calssification functions, averaged over 1536 independent experiments.}
    \label{figure:fg}
\end{figure}


\subsection{Effect of threshold $c$}
\label{sec:c-sim}
In this section, we demonstrate the effect of $c$ in \eqref{equation:f-c}. We consider a model of the form
\begin{equation}
    \begin{aligned}
        Y\mid X&\sim\Bern\left(\Phi\left(\beta_kX_kZ_k+\sum_{j\ne k}\beta_jX_j
        \right)\right),\\
        X&\sim\mathcal N(0,\Sigma), Z\sim\Bern(0.5), X\ci Z,
    \end{aligned}
\label{equation:simulation-model-c}
\end{equation}
where $X$ is a $p$-dimensional column vector and we provide lower confidence bound for $\ETV(X_k,Y,(X_{\text-k},Z))$, where we rotate the value of $k$. We choose $\Sigma_{ij}=\rho^{|i-j|}$. We set $p=10$, $\beta=(0,0,0,1,2,3,4,5,6,7)$ and $n=200$. We focus on two classification functions as in \eqref{equation:f-c} and the results are reported in Figure~\ref{figure:fg_c}. For the ``logistic'' model, we use logistic models for $p_{\theta_1}$ and $p_{\theta_2}$, and $\hat\theta_1$ and $\hat\theta_2$ are trained on cross-validated data; for the ``logistic\_int'' model, we add interactions between $Z$ and other $X_j$'s into the models. We explore the following methods to choose $c$.

\begin{enumerate}
    \item ``Naive'' means setting $c=0$.
    \item ``CV'' means using 10-fold cross validation to choose $c$. 
    Note that this cross validation is  different  from one we use to train $f$ in Algorithm~\ref{algorithm:cat-fg-cv}.
\end{enumerate}
We further compare our methods with an oracle method described below. Consider the $r$th fold in Algorithm~\ref{algorithm:cat-fg-cv}, where we have trained classifier $f_{B_r}$ and the evaluation set $B_r^c$. We use $\hat\mu(f_{B_r,c}, D)$ and $\hat\sigma^2(f_{B_r,c}, D)$ to denote the sample mean and variance of applying Algorithm~\ref{algorithm:cat-fg} with dataset $D$ and $f_{B_r,c}$, where $f_{B_r,c}$ is combining $c$ with $f_{B_r}$ as in \eqref{equation:f-c}. ``CV\_oracle'' means setting
\begin{equation*}
c_\text{oracle}=\argmax_c\frac1K\sum_{k=1}^K\left(1-\hat\mu(f_{B_r,c}, D_k)-z_\alpha\hat\sigma(f_{B_r,c}, D_k)/\sqrt{|B_r\cup B_r^c|}\right),
\end{equation*}
where $D_1,\dots,D_K$ are $K$ independent regeneration of dataset $B_r^c$ with the true distribution.

The results are summarized in Figure~\ref{figure:fg_c}. We can see that ``Oracle'' outperforms ``CV'' and ``Naive'', matching intuition. ``CV'' outperforms ``Naive'' and is quite close to the oracle method.
\begin{figure}[H]
    \centering
\includegraphics[width = 1\textwidth]{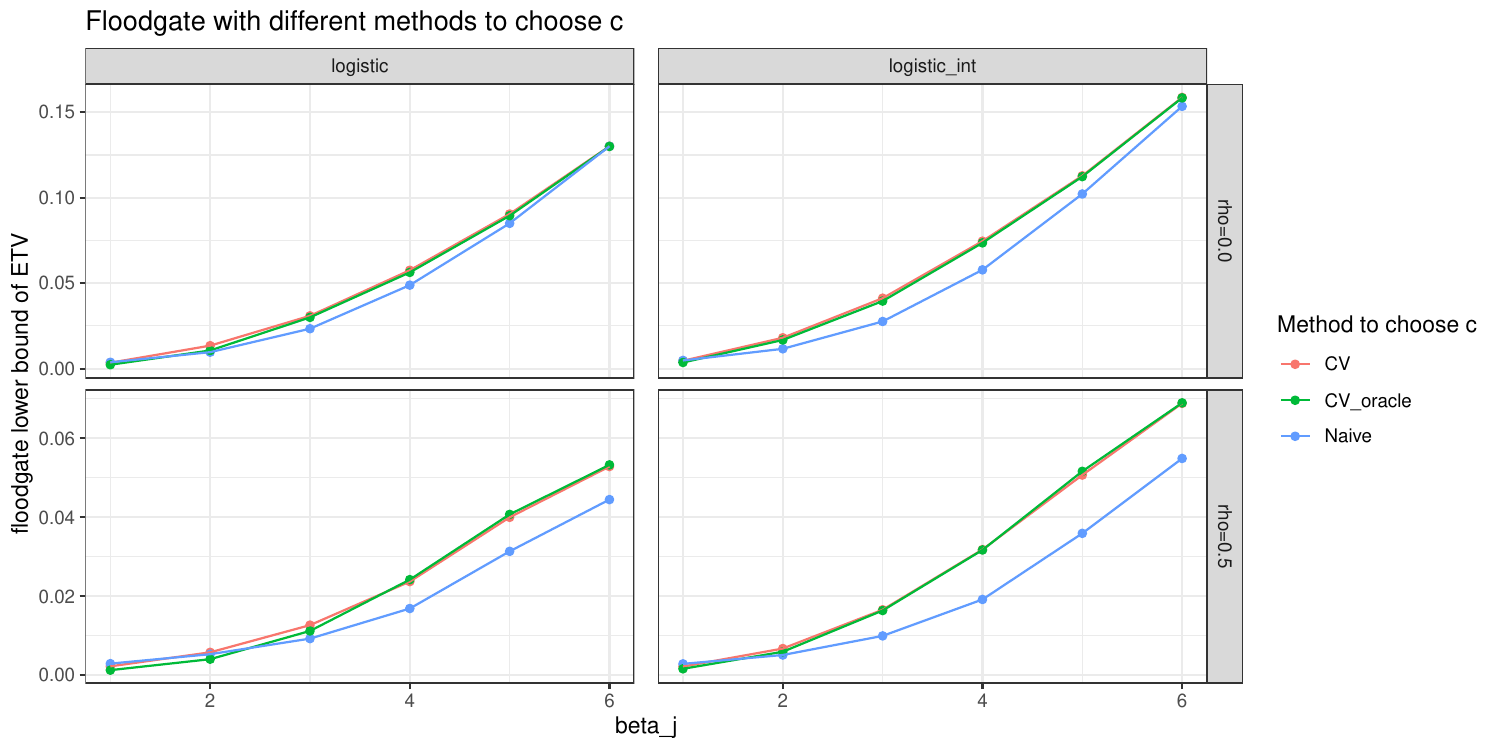}
    \caption{Violin plots of Floodgate bound based on 1536 experiments.  ``logistic'' and ``logistic\_int'' denote the classification function.}
    \label{figure:fg_c}
\end{figure}
\section{Application in conjoint analysis}
\label{sec:conjoint}
\subsection{Conjoint analysis}
Conjoint analysis \citep{luce1964simultaneous} is a survey-based statistical technique, where respondents are given a number of profiles with different attributes are asked to pick a favourite or rank them. A popular VIM used by social scientists is the average marginal treatment effect (AMCE), and there has been work on constructing confidence intervals on the AMCE \citep{hainmueller2014causal,ono2019contingent}. The AMCE, as its name suggests, considers only the marginal effect and may fail to capture some interactions. \citet{DH-KI-LJ:2022} introduced a hypothesis testing procedure for the null hypothesis $Y\ci X\mid Z$ in the conjoint analysis context, but they did not propose a VIM. We will bridge this gap by using the ETV as the VIM in conjoint analysis and construct confidence intervals on it.

\subsection{US general election data}
In this section, we analyze the election data in \citet{ono2019contingent}. In the experiment, each respondent is given two hypothetical political candidate profiles and asked to pick the one that they prefer. Each data point can thus be written the form
\begin{equation*}
(Y, X^0, X^1, Z^0, Z^1, Z^R),
\end{equation*}
where $(X^k,Z^k)$ are the attributes of Candidate $k$ with $X$ being the attribute of interest, $Z^R$ is the attribute of the respondent, and $Y\in\{0,1\}$ is the choice of the respondent. We use $Z$ to denote the collection of $(Z^1,Z^2,Z^R)$. We focus on the presidential election data with $n=7190$ observations. In each observation, there are 13 attributes of two political candidates and 11 attributes of the respondent, so $X^0,X^1$ are scalars, $Z^0,Z^1$ are 12-dimensional and $Z^R$ is 11-dimensional. Here, each candidate's attributes are uniformly and independently randomized, with a few hard constraints; for example, a candidate with a high-skill profession must have at least two years of college experience. More details on the data can be found in Appendix~\ref{sec:data-details}.



\subsection{Floodgate inference for ETV}
\label{sec:conjoint-inference}
We choose $X^0$ and $X^1$ to be the party affiliations of the candidates, which take value from \{\textbf{D}emocratic, \textbf{R}epublican\}. We can see that while one would expect $X^{0,1}$ to play an important role in the respondent's choice $Y$, its marginal effect would be close to zero (assuming there is no party affiliation bias in the respondents). To use the AMCE, we would have to re-define $X^{0,1}$ as whether that candidate has the same party affiliation as the respondent. The ETV, on the other hand, can be employed directly. This issue could be more severe for other features that are not as straightforward to correct. For instance, the original analysis in \citet{ono2019contingent} based on the AMCE dismissed gender as a statistically significant factor for congressional political candidates, while the analysis \citet{DH-KI-LJ:2022} suggested that gender does matter for congressional candidates through interactions with other factors, including the respondent's party affiliation.

We have shown in Lemma~\ref{lemma:etv-range} that in the case of binary response, the upper bound of ETV is $1$. In our specific case, we should expect even lower upper bound.

Suppose we have the following ideal data generating distribution, where
\begin{equation*}
\p(\text{candidate party affiliation is independent})=q\in[0,1],
\end{equation*}
\begin{equation*}
X=(X^0, X^1)\mid Z\sim\Unif\{(\text{D},\text{D}), (\text{D},\text{R}), (\text{R},\text{D}), (\text{R},\text{R})\},
\end{equation*}
and
\begin{equation}
Y\mid X,Z\sim\left\{\begin{aligned}
&\Bern(0.5),\text{ if respondent is independent or two candidates have same party affiliation};\\
&\Bern(p),\text{ if candidates' party affiliation differ and candidate 1 is same as respondent};\\
&\Bern(1-p),\text{ if candidates' party affiliation differ and candidate 0 is same as respondent}.\\
\end{aligned}
\right.
\label{equation:resampled-y}
\end{equation}
In this case, $\ETV(X,Y,Z)=(1-q)|p-0.5|$. In the election data, $q\approx0.27$, so even if $p=1$, which means a respondent deterministically prefers the candidate from the same party, $\ETV(X,Y,Z)$ is merely around 0.365, far from the general upper bound of $1$. Simulations show that we are able to produce floodgate gate lower bound close to the actual ETV with Algorithm~\ref{algorithm:cat-fg-cv}. The derivation and supportive simulations are included in Appendix~\ref{sec:conjoint-etv}.

Returning to the real data analysis, we apply Algorithm~\ref{algorithm:cat-fg-cv} with $k=10$ and $J=100$. The classifier family $f$ is chosen to be the model-based $f$ in equation~\eqref{equation:f-c}, where the models are HierNet \citep{bien2013lasso}, following \citet[Section~3.3]{DH-KI-LJ:2022}. 
We summarize our analysis in Figure~\ref{figure:conjoint-results}. Each violin plot summaries 40 independent runs. Here, we include both the floodgate lower bound and the floodgate estimate (that is, manually setting the confidence interval width to zero). We use the ``Naive'' and ``CV'' methods to choose $c$ as in Section~\ref{sec:c-sim}. 
We can see that activating $c$ in $f$ boosts performance, and we obtain an ETV estimate of around 0.1 and an ETV lower bound of around 0.08. The 0.01 estimate translates to around $p=0.63$ in model~\eqref{equation:resampled-y}.
Further details are deferred to Appendix~\ref{sec:conjoint-details}.

\begin{figure}[h]
    \centering
\begin{subfigure}{.45\textwidth}
  \centering
  \includegraphics[width=.9\linewidth]{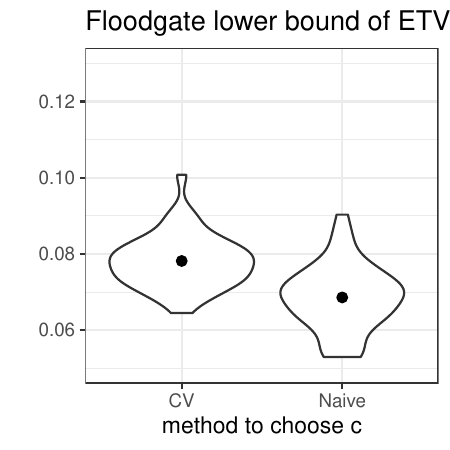}
  \caption{95\%-ETV floodgate lower bound.}
\end{subfigure}
\begin{subfigure}{.45\textwidth}
  \centering
  \includegraphics[width=.9\linewidth]{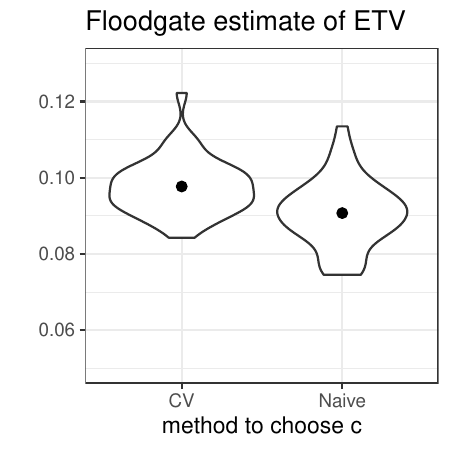}
  \caption{ETV floodgate estimate.}
\end{subfigure}
    \caption{Conjoint analysis result of the US general election data. The black dots denote the mean of the violin plots.}
    \label{figure:conjoint-results}
\end{figure}


\section*{Acknowledgements}
L.J. was partially supported by a CAREER grant from the National Science Foundation (Grant \#DMS2045981).

\bibliography{refs}{}

\newpage
\appendix
\section{Proofs}
\begin{proof}[Proof of Lemma~\ref{lemma:etv-range}]
Let $\mathcal X$, $\mathcal Y$ and $\mathcal Z$ be spaces $X$, $Y$ and $Z$ live in; let $p$, $q$, $r$ and $s$ denote the densities of $\mathcal L(Z)$, $\mathcal L(X\mid Z)$, $\mathcal L(Y\mid X, Z)$ and $\mathcal L(Y\mid Z)$. We only prove the case where $|\mathcal Y|<\infty$, while the case $|\mathcal Y|=\infty$ can be treated similarly.

When $|\mathcal Y|<\infty$, We scale \eqref{equation:fg_target} as
\begin{equation*}
\begin{aligned}
2(1-1/|\mathcal Y|)\ETV(X,Y,Z)&=\sum_{y\in\mathcal Y}|r(y\mid x,z)-s(y\mid z)|\int_{x\in\mathcal X}q(x\mid z)\di x\int_{z\in\mathcal Z}p(z)\di z\\
&=\sum_{y\in\mathcal Y}\int_{z\in\mathcal Z}\e_{X\mid Z=z}\left[|r(y\mid X,z)-s(y\mid z)|\right]p(z)\di z\\
&\le\sum_{y\in\mathcal Y}\int_{z\in\mathcal Z}2s(y\mid z)(1-s(y\mid z))p(z)\di z\text{ (Lemma~\ref{lemma:simple-1})}\\
&=2\int_{z\in\mathcal Z}\left[\sum_{y\in\mathcal Y}s(y\mid z)(1-s(y\mid z))\right]p(z)\di z\\
&\le2\int_{z\in\mathcal Z}(1-1/|\mathcal Y|)p(z)\di z=2(1-1/|\mathcal Y|)\text{ (Lemma~\ref{lemma:simple-2})}.
\end{aligned}
\end{equation*}
Here, we are using two simple lemmas of which the proofs are omitted. The upper bound is achieved when $X$, conditional on $Z$, deterministically determines $Y$ and $s(y\mid Z)=1/|\mathcal Y|$ almost surely for all $y$.
\end{proof}

\begin{lemma}
\label{lemma:simple-1}
If $X\in[0,1]$, $\e[X]=\mu$ and $\p(X=\mu)=p$, then $\e[|X-\mu|]\le2(1-p)\mu(1-\mu)$, where the equality is achieved when $X\mid(X\ne\mu)\sim\textnormal{Bern}(\mu)$.
\end{lemma}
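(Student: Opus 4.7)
The plan is to establish a pointwise envelope inequality and then take expectations. Specifically, I claim that for all $x\in[0,1]$,
\[
|x-\mu| \le (1-\mu)\,x + \mu\,(1-x).
\]
To verify this I would split on the sign of $x-\mu$: both sides are piecewise linear in $x$, so it suffices to check at the endpoints $x=0$ and $x=1$ on each piece. At $x=0$ both sides equal $\mu$; at $x=1$ both sides equal $1-\mu$; and at $x=\mu$ the left side is $0$ while the right side is $2\mu(1-\mu)\ge 0$. By linearity on each of $[0,\mu]$ and $[\mu,1]$ the inequality follows, with equality iff $x\in\{0,1\}$.

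The crucial second step is to restrict to the event $\{X\ne\mu\}$ before applying the envelope, because the envelope is loose at $x=\mu$ and we would otherwise drop the factor $(1-p)$. Since $|X-\mu|$ vanishes on $\{X=\mu\}$, I can write $\e[|X-\mu|]=\e[|X-\mu|\,\ind(X\ne\mu)]$, apply the envelope inequality pointwise on that event, and then expand:
\[
\e[X\,\ind(X\ne\mu)] \;=\; \e[X] - \mu\,\p(X=\mu) \;=\; \mu(1-p),
\]
\[
\e[(1-X)\,\ind(X\ne\mu)] \;=\; \p(X\ne\mu) - \e[X\,\ind(X\ne\mu)] \;=\; (1-p)(1-\mu).
\]
Substituting gives
\[
\e[|X-\mu|] \;\le\; (1-\mu)\cdot\mu(1-p) + \mu\cdot(1-\mu)(1-p) \;=\; 2(1-p)\mu(1-\mu),
\]
which is the claimed bound.

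For the equality characterization, note that equality propagates from the pointwise envelope only if $|X-\mu|=(1-\mu)X+\mu(1-X)$ almost surely on $\{X\ne\mu\}$, i.e.\ $X\mid(X\ne\mu)$ is supported on $\{0,1\}$. Combined with $\e[X\mid X\ne\mu]=\mu$ (which follows from $\e[X]=\mu$ and $\p(X=\mu)=p$), this forces $X\mid(X\ne\mu)\sim\Bern(\mu)$, matching the stated equality condition. There is no real obstacle here beyond remembering to extract the $\ind(X\ne\mu)$ factor before applying the envelope; the rest is a one-line case check plus two elementary moment identities.
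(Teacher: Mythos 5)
Your proof is correct and complete. Note that the paper itself gives no argument to compare against: it states that Lemmas~\ref{lemma:simple-1} and \ref{lemma:simple-2} are ``two simple lemmas of which the proofs are omitted,'' so you are filling a genuine gap rather than rederiving a printed proof. Your route --- the pointwise linear envelope $|x-\mu|\le(1-\mu)x+\mu(1-x)$ on $[0,1]$, applied only on the event $\{X\ne\mu\}$ so that the envelope's slack of exactly $2\mu(1-\mu)$ at $x=\mu$ is converted into the factor $(1-p)$ --- is the natural one, and the two moment identities $\e[X\,\ind(X\ne\mu)]=\mu(1-p)$ and $\e[(1-X)\,\ind(X\ne\mu)]=(1-p)(1-\mu)$ are verified correctly. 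Your equality analysis is also right: the envelope is tight precisely at $x\in\{0,1\}$ (for $\mu\in(0,1)$), and $\e[X\mid X\ne\mu]=\mu$ then forces $X\mid(X\ne\mu)\sim\Bern(\mu)$, which both matches and slightly sharpens the lemma's ``achieved when'' statement into an ``if and only if'' (the degenerate cases $\mu\in\{0,1\}$ or $p=1$ being trivial). An equivalent bookkeeping would be to first prove the unrestricted bound $\e[|X-\mu|]\le2\mu(1-\mu)$ from the same envelope and then subtract the atom's contribution $2p\mu(1-\mu)$ of slack at $x=\mu$; there is no substantive difference.
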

\begin{lemma}
\label{lemma:simple-2}
Let $0\le a_i\le 1$, $\sum_{i=1}^n a_i=1$, then
\begin{equation*}
\sum_{i=1}^na_i(1-a_i)\le1-1/n.
\end{equation*}
The equality is achieved when $a_i=1/n$ for all $i$.
\end{lemma}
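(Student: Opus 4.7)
The plan is to reduce the inequality to a standard moment bound. First, using the constraint $\sum_{i=1}^n a_i = 1$, rewrite
\begin{equation*}
\sum_{i=1}^n a_i(1-a_i) \;=\; \sum_{i=1}^n a_i - \sum_{i=1}^n a_i^2 \;=\; 1 - \sum_{i=1}^n a_i^2,
\end{equation*}
so the claim reduces to showing $\sum_{i=1}^n a_i^2 \ge 1/n$, with equality iff $a_i = 1/n$ for all $i$.

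Next, I would invoke the Cauchy--Schwarz inequality applied to the vectors $(a_1,\ldots,a_n)$ and $(1,\ldots,1)$:
\begin{equation*}
1 \;=\; \Bigl(\sum_{i=1}^n a_i \cdot 1\Bigr)^{\!2} \;\le\; \Bigl(\sum_{i=1}^n a_i^2\Bigr)\Bigl(\sum_{i=1}^n 1^2\Bigr) \;=\; n\sum_{i=1}^n a_i^2,
\end{equation*}
which rearranges to $\sum_i a_i^2 \ge 1/n$. Equivalently, one could apply Jensen's inequality to the convex map $x \mapsto x^2$ to obtain the same bound. The equality case of Cauchy--Schwarz forces $(a_1,\ldots,a_n)$ to be proportional to $(1,\ldots,1)$; combined with $\sum_i a_i = 1$ this pins down $a_i = 1/n$ for every $i$, matching the stated equality condition.

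A small observation worth noting: the hypothesis $0 \le a_i \le 1$ is not actually used in the derivation — only $\sum_i a_i = 1$ is invoked. The $[0,1]$ bounds are natural in the context of Lemma~\ref{lemma:etv-range}, where the $a_i$ play the role of conditional probabilities $s(y\mid z)$, and they also ensure that each summand $a_i(1-a_i)$ is nonnegative, giving the quantity its usual ``entropy-like'' interpretation. There is no real obstacle here: the lemma is essentially a one-line consequence of Cauchy--Schwarz (or equivalently of the fact that the uniform distribution minimizes the squared $\ell_2$ norm over the probability simplex).
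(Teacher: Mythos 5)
Your proof is correct: the reduction to $\sum_i a_i^2 \ge 1/n$ and the Cauchy--Schwarz (or Jensen) step are exactly the standard argument, and your equality analysis and the remark that the $[0,1]$ bounds are not needed are both accurate. The paper explicitly omits the proofs of Lemmas~\ref{lemma:simple-1} and~\ref{lemma:simple-2} as ``simple,'' and this is evidently the intended one-line argument, so there is nothing to compare against and no gap to report.
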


\begin{proof}[Proof of Theorem~\ref{theorem:tv}]
Define $\ell(f)=\e_\pi\left[\frac1aI(E=1)(1-f(\omega))+\frac{1}{1-a}I(E=0)f(\omega)\right]$. Then
\begin{equation*}
\begin{aligned}
\ell(f)&=\e_\pi\left[|f(\omega)-E|\left(\frac1aI(E=1)+\frac{1}{1-a}I(E=0)\right)\right]\\
&=\e[(1-f(\omega))/a\mid E=1]\p(E=1)+\e\left[f(\omega)/(1-a)\mid E=0\right]\p(E=0)\\
&=\int(1-f(\omega))\pi_1(\omega)\di\omega+\int f(\omega)\pi_0(\omega)\di\omega\\
&=1+\int f(\omega)(\pi_0(\omega)-\pi_1(\omega))\di\omega.
\end{aligned}
\end{equation*}
The minimum of $\ell(f)$ is attained when
\begin{equation*}
f(\omega)=f^*(\omega)=I(\pi_0(w)<\pi_1(\omega)).
\end{equation*}
It is not hard to see that $2(1-\ell(f^*))=\TV(\pi_1,\pi_0)$.
\end{proof}
\begin{proof}[Proof of Corollary~\ref{corollary:fg}]
Let $K\sim\Unif\{0,1,\dots,J\}$ independent of $(X^{(0:J)},Y,Z)$. Then we apply Theorem~\ref{theorem:tv} with $(X^{(K)},Y,Z)$ as $\omega$ and $I(K=0)$ as $E$ to get
\begin{multline*}
2\left(1-\e\left[(J+1)I(E=1)(1-f(X^{(K)},Y,Z))+\frac{J+1}{J}I(E=0)f(X^{(K)},Y,Z)\right]\right)\\
\le\int |p_{y|x,z}(y|x,z)-p_{y|z}(y|z)|p_{x|z}(x|z)p_z(z)\di x\di y\di z.
\end{multline*}
Evaluate the left hand side by conditioning on $K$ and we prove the claim.
\end{proof}

\begin{proof}[Proof of Theorem~\ref{theorem:impossible-upper-bound}]
The proof technique of this theorem is a generalization of the strategy used in the proof of \citet[Lemma~1]{barber2020distribution}, which is itself a generalization of the construction used in the proof of \citet[Proposition~5.1]{vovk2005algorithmic}.

We fix $\mathcal L_X$ in the proof, so we omit the subscript $\mathcal L_X$ of $C$. We partition the sample space of $X$ into $NK$ equal-probability Borel sets $B_{1:NK}$, $N>n$, which is possible because $\mathcal L_X$ is a continuous distribution.

We are going to define data generating distributions $D_0,\dots,D_5$ for $(X_i,Y_i)_{i=1}^n$, where $D_0$ is the distribution we care about, and we construct $D_{1:5}$ in a way such that $\TV(D_{i-1},D_{i})$ is small for $i=1,\dots,5$. Our goal is to show \eqref{equation:upper-bound} holds for $D_5$, so that it also has to hold for $D_0$. We use $\mathcal L(B)$ to denote the distribution $\mathcal L$ restricted to the set $B$.
\begin{itemize}
    \item $D_0$: sample $(X_i,Y_i)\simiid\mathcal L=\mathcal L_X\times\Unif(\{1,\dots,K\})$;
    \item $D_1$: randomly sample $n$ sets $\tilde B_{1:n}$ with replacement from $B_{1:NK}$; sample $Y_i\simiid\Unif(\{1,\dots,K\})$ and $X_i\mid\tilde B_{1:n}\sim \mathcal L_X(\tilde B_i)$ independently;
    \item $D_2$: randomly sample $n$ sets $\tilde B_{1:n}$ without replacement from $B_{1:NK}$; sample $Y_i\simiid\Unif(\{1,\dots,K\})$ and $X_i\mid\tilde B_{1:n}\sim \mathcal L_x(\tilde B_i)$ independently;
    \item $D_3$: randomly permutate $B_{1:NK}$ to be $(\tilde B_{k,m})_{1\le k\le K, 1\le m\le N}$; sample $Y_i\simiid\Unif(\{1,\dots,K\})$, sample $I_i\simiid\Unif(\{1,\dots,N\})$ but resample until all the $I_i$'s are distinct, and then sample $X_i\mid\tilde B\sim \mathcal L_x(\tilde B_{Y_i,I_i})$ independently;
    \item $D_4$: randomly permutate $B_{1:NK}$ to be $(\tilde B_{k,m})_{1\le k\le K, 1\le m\le N}$; sample $Y_i\simiid\Unif(\{1,\dots,K\})$, $I_i\simiid\Unif(1:N)$ and $X_i\mid\tilde B\sim \mathcal L_x(\tilde B_{Y_i,I_i})$ independently;
    \item $D_5$: randomly permutate $B_{1:NK}$ to be $(\tilde B_{k,m})_{1\le k\le K, 1\le m\le N}$; sample $Y_i\simiid\Unif(\{1,\dots,K\})$ and $X_i\mid Y_i,\tilde B\sim \mathcal L_X\left(\cup_{m=1}^N\tilde B_{Y_i,m}\right)$ independently;
\end{itemize}
By assumption, because $D_5$ is an i.i.d. data generating distribution for $(X_i,Y_i)$ conditional on $\tilde B$ that respects the marginal distributions of $X$ and $Y$,
\begin{equation*}
\p_{D_5}(C(X_{1:n},Y_{1:n})\ge1\mid\tilde B)\ge1-\alpha,
\end{equation*}
where $1$ is the attained ETV upper bound per the calculation in the proof of Lemma~\ref{lemma:etv-range}. After marginalizing out $\tilde B$, we have $\p_{D_5}(C(X_{1:n},Y_{1:n})\ge1)\ge1-\alpha$.

We then notice that $D_4$ and $D_5$ are actually the same data generating distribution, so \eqref{equation:upper-bound} holds under $D_4$ as well.

Now we examine the difference between $D_3$ and $D_4$. The probability of not having to resample is $N!/(N^n(N-n)!)$, so the total variation distance between $D_3$ and $D_4$ is upper bounded by $\epsilon(n,N)=1-N!/(N^n(N-n)!)$. Thus,
\begin{equation}
\p_{D_3}(C(X_{1:n},Y_{1:n})\ge1)\ge1-\alpha-\varepsilon(n,N).
\label{equation:upper-bound-relaxed}
\end{equation}
Next, we notice that $D_2$ and $D_3$ are also the same. This is because they both essentially use $n$ random samples without replacement from $B_{1:NK}$. Therefore, \eqref{equation:upper-bound-relaxed} also holds for $D_2$.

Similarly, we can observe that the total variation distance between $D_1$ and $D_2$ is upper bounded by one minus the probability of all sampled sets $B_{1:n}$ in $D_1$ are distinct. This gives us the upper bound of $\varepsilon(n,NK)$. As a result, we get
\begin{equation}
\p_{D_1}(C(X_{1:n},Y_{1:n})\ge1)\ge1-\alpha-\varepsilon(n,N)-\varepsilon(n,NK).
\label{equation:upper-bound-relaxed2}
\end{equation}
Finally, there is no difference between $D_1$ and $D_0$, so \eqref{equation:upper-bound-relaxed2} also holds for $D_0$. Since $\epsilon(n,N)\to0$ as $N\to\infty$, the fact that \eqref{equation:upper-bound-relaxed2} holds for $D_0$ for any $N$ means that \eqref{equation:upper-bound} holds for $D_0$, as desired.
\end{proof}
\section{ETV and sensitivity analysis}
\label{sec:sens}
Let $B(X,Z,U)$ be the almost sure supremum of
\begin{equation*}
\max\left\{\frac{p(X \mid U, Z)}{p(X \mid Z)},\frac{p(X \mid Z)}{p(X \mid U, Z)}\right\}.
\end{equation*}
Then
\begin{equation*}
\begin{aligned}
p(y\mid z,x)=&\int p(y\mid z,x,u)p(u\mid z,x)\di u\\
&=\int p(y\mid z,u)\frac{p(x\mid u,z)p(u\mid z)}{p(x\mid z)}\di u\\
&=\int\underbrace{\frac{p(x\mid u,z)}{p(x\mid z)}}_{\in[1/B,B]}\underbrace{p(y\mid z,u)p(u\mid z)}_{\text{integrates to $p(y\mid z)$}}\di u\in[p(y\mid z)/B,Bp(y\mid z)].
\end{aligned}
\end{equation*}
Then
\begin{equation*}
\begin{aligned}
2(1-1/|\mathcal Y|)\ETV(X,Y,Z)&=\int|p(y\mid x,z)-p(y\mid z)|\int p(x\mid z)\di x\int p(z)\di z\\
&\le\int\max(B-1,1-1/B)p(y\mid z)\int p(x\mid z)\di x\int p(z)\di z\\
&=\max(B-1,1-1/B)=B-1.
\end{aligned}
\end{equation*}
Thus, $B(X,Z,U)\ge1+2(1-1/|\mathcal Y|)\ETV(X,Y,Z)$.
\section{Comparison with MACM}
\label{sec:macm}
Continuing equation~\eqref{equation:macm}, the $R_i$ in \citet[Algorithm~3]{LZ-LJ:2020} is equivalent to $1-L_i$ in Algorithm~\ref{algorithm:cat-fg}. Note that
\begin{equation*}
R_i=\left\{\begin{aligned}
&\p(U_i<0\mid Z_i)-\mathbb I(U_i<0),\text{ if }Y=1,\\
&\p(U_i>0\mid Z_i)-\mathbb I(U_i>0),\text{ if }Y=-1.
\end{aligned}\right.
\end{equation*}
and
\begin{equation*}
\begin{aligned}
1-L_i&=f(X_i,Y_i,Z_i)-\frac1J\sum_{j=1}^Jf(X_i^{(j)},Y_i^{(j)},Z_i^{(j)})\\
&=f(X_i,Y_i,Z_i)-\hat{\e}_{X\mid Z=Z_i}[f(X,Y_i,Z_i)]\\
&=\left\{\begin{aligned}\mathbb I(U_i\ge0)-\hat{\p}_{X\mid Z=Z_i}(U_i\ge0),\text{ if }Y_i=1,\\
\mathbb I(U_i\le0)-\hat{\p}_{X\mid Z=Z_i}(U_i\le0),\text{ if }Y_i=-1.
\end{aligned}\right.
\end{aligned}
\end{equation*}
\section{Conjoint analysis further details}
\label{sec:conjoint-app}
\subsection{Additional details about data}
\label{sec:data-details}
In this section, we include some additional details on the data used in Section~\ref{sec:conjoint}. Table~\ref{table:candidate-attr} includes attributes of the candidate profiles. Table~\ref{table:respondent-attr} includes attributes of the respondents.

\begin{table}[h]
\centering
\begin{tabular}{| m{15em} | m{25em} |}
\hline
\textbf{Attributes} & \textbf{Values}\\
\hline
Sex  &  
      Male, 
      Female
    \\ 
\hline
Age  &  
      36, 
      44, 
      52, 
      60, 
      68, 
      76
   \\
\hline
Race/Ethnicity  & 
      White, 
      Black, 
      Hispanic, 
      Asian American
   \\
\hline
Family &   
      Single (never married), 
      Single (divorced), 
      Married (no child), 
      Married (two children)
   \\
\hline
Experience in public office  &  
      12 years, 
      8 years, 
      4 years, 
      No experience
   \\
\hline
Salient personal characteristics  & 
      Provides strong leadership, 
      Really cares about people like you, 
      Honest, 
      Knowledgeable, 
      Compassionate, 
      Intelligent
   \\
\hline
Party affiliation &  
       Democrat Party, Republican Party\\
\hline
Policy area of expertise &  
      Foreign policy, 
      Public safety (crime), 
      Economic policy, 
      Health care, 
      Education, 
      Environmental issues
     \\
\hline
Position on national security  &  
      Wants to cut military budget and keep U.S. out of war, 
      Wants to maintain strong defense and increase U.S. influence
    \\ 
\hline
Position on immigrants  & 
Favors giving citizenship or guest worker status to
undocumented immigrants,
Opposes giving citizenship or guest worker status to
undocumented immigrants 
   \\
\hline
Position on abortion & 
Abortion is a private matter (pro-choice),
Abortion is not a private matter (pro-life),
No opinion (neutral) 
   \\
\hline
Position on government deficit &   
Wants to reduce the deficit through tax increase,
Wants to reduce the deficit through spending cuts,
Does not want to reduce the deficit now 
   \\
\hline
Favorability rating among public &  
      34\%, 
      43\%, 
      52\%, 
      61\%, 
      70\%
   \\
\hline
\end{tabular}
\caption{Types of attributes varied in candidate profiles (Table~1 in \citet{ono2019contingent}).}
\label{table:candidate-attr}
\end{table}

\begin{table}[h]
\centering
\begin{tabular}{| m{15em} | m{25em} |}
\hline
\textbf{Attributes} & \textbf{Values}\\
\hline
Sex  &  
      Male, 
      Female
    \\ 
    \hline
Education level & BA degree, No BA degree\\
\hline
Age group & 18-29, 30-50, 51-65, 66 or older\\
\hline
Age  &  Age in years
   \\
\hline
Social class  & 
      Lower class, Middle class, Upper class
   \\
\hline
Region &   
      South, Nonsouth
   \\
\hline
Race/Ethnicity  &  
      White, Black, Hispanic, Other
   \\
\hline
Partisanship &  
       Democrat Party, Republican Party, Independent\\
\hline
Thought on Hillary Clinton &
Dislike, Like, Neutral\\
\hline
Interest in politics & Not at all interested, Not very interested, Somewhat interested, Very interested\\
\hline
Political ideology & Conservative or liberal levels (7 levels)\\
\hline
\end{tabular}
\caption{Types of attributes recorded in respondents.}
\label{table:respondent-attr}
\end{table}

\subsection{ETV upper bound}
\label{sec:conjoint-etv}

We derive the ETV upper bound in Section~\ref{sec:conjoint-inference}. First, we notice that due to the symmetry of labeling, $P(Y=0\mid Z=z)=P(Y=1\mid Z=z)=0.5$ for any $z$. If $z$ is such that the respondent's party affiliation is independent, then $P(Y=0\mid X^0=x_0,X^1=x_1,Z=z)=P(Y=1\mid X^0=x_0,X^1=x_1,Z=z)=0.5$ for any $(x_0,x_1)$; otherwise, $P(Y=0\mid X^0=x_0,X^1=x_1,Z=z)$ takes value $0.5, 0.5, p, 1-p$ for $(x_0,x_1)\in\{(\text{D},\text{D}), (\text{D},\text{R}), (\text{R},\text{D}), (\text{R},\text{R})\}$. Then the ETV is
\begin{equation*}
\begin{aligned}
\ETV&=q\times0+(1-q)\sum_{y\in\{0,1\}}\sum_{x_0\in\{\text{R},\text{D}\}}\sum_{x_1\in\{\text{R},\text{D}\}}\frac14|P(Y=y\mid X^0=x_0,X^1=x_1,Z=z)-P(Y=y\mid Z=z)|\\
&=(1-q)\sum_{y\in\{0,1\}}\frac14\left(0+0+|p-0.5|+|1-p-0.5|\right)\\
&=(1-q)\times2\times\frac14\times|2p-1|=(1-q)|p-0.5|.
\end{aligned}
\end{equation*}

To test how well our algorithm does in this ideal setting, We regenerate synthetic $Y$ according to \eqref{equation:resampled-y} and apply Algorithm~\ref{algorithm:cat-fg-cv}. In Figure~\ref{figure:conjoint_sim}, we plot the average floodgate lower bound from 40 independent experiments (but they share the same synthetic response) and the theoretical upper bound $(1-q)|p-0.5|$. We can see that in moderate to high signal regimes, the floodgate bound is close to the theoretical upper bound.
\begin{figure}[H]
    \centering
\includegraphics[width = 0.9\textwidth]{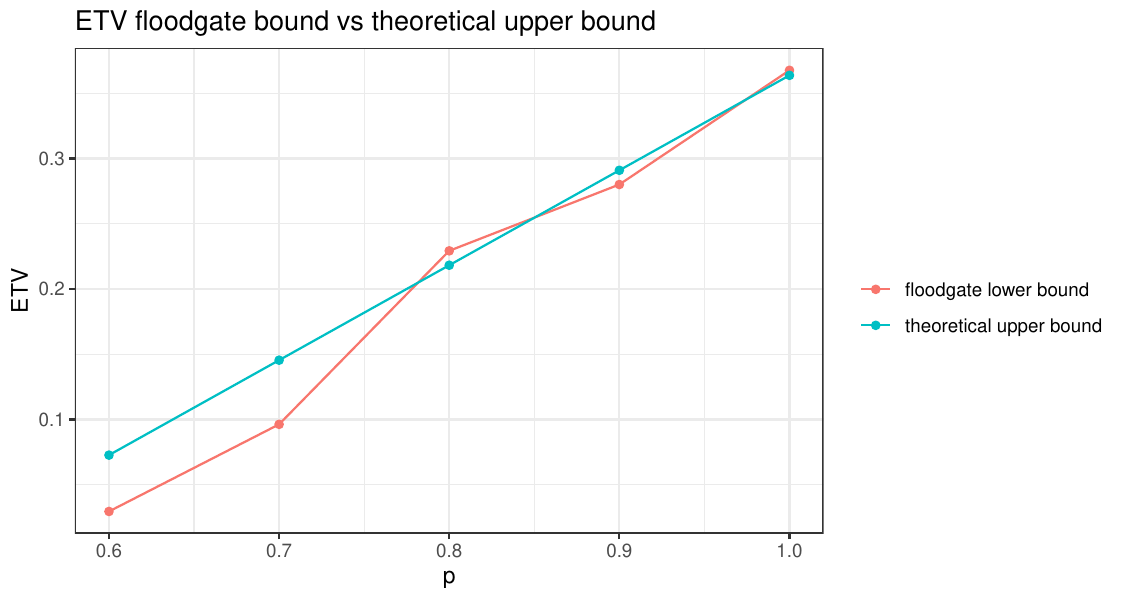}
    \caption{Conjoint analysis with synthetic responses.}
    \label{figure:conjoint_sim}
\end{figure}

\subsection{Analsyis details}
\label{sec:conjoint-details}
In the experiments in Section~\ref{sec:conjoint-inference}, $f$ is chosen to be
\begin{equation*}
f(x,y,z;p_{\theta_1},p_{\theta_2},c)=\left\{\begin{aligned}
&1, &\hat p_i>0.5+c,\\
&0, &\hat p_i<0.5-c\\
&0.5, &|\hat p_i-0.5|\le c,
\end{aligned}\right.
\end{equation*}
where $p_{\theta_1}(y\mid x,z)$ is a HierNet model with a fixed penalty parameter, where interactions between politician's gender and party affliation are added as a feature, and $p_{\theta_2}(y\mid z)$ is a HierNet model with the same penalty parameter. In the method ``CV'' to choose $c$, we further partition the training data $B_r$ into $m=10$ folds $C_{r1}^c,\dots,C_{rm}^c$. We then calculate the cross-validated loss
\begin{equation*}
\text{loss}_{r}(c)=\frac1m\sum_{j=1}^m\text{loss of }f(p_{\hat\theta_1(C_{rj})},p_{\hat\theta_2(C_{rj})},c)\text{ on }B_r\setminus C_{rj},
\end{equation*}
where $\hat\theta(C)$ means $\hat\theta$ estimated on dataset $C$, choose the $c_r$ that minimizes $\text{loss}_{r}(c)$, and let
\begin{equation*}
f_{B_r}=f(p_{\hat\theta_1(B_r)},p_{\hat\theta_2(B_r)},c_r).
\end{equation*}
\end{document}